\title{
  Unique Normal Forms in \\
  \mbox{Infinitary Weakly Orthogonal Term Rewriting}\thanks{%
  An earlier version of this paper appeared in~\cite{la:roel:2009}.}
}
\author{J\"org Endrullis \and Clemens Grabmayer \and Dimitri Hendriks \\[.5ex] \and Jan Willem Klop}
\date{}
\begin{document}

\maketitle

\begin{dedication}
  Dedicated to Roel de Vrijer on the occasion of his 60th birthday.
\end{dedication}

\begin{abstract}
  
The theory of finite and infinitary term rewriting is extensively developed 
for orthogonal rewrite systems, but to a lesser degree for weakly orthogonal rewrite systems.
In this note we present some contributions to the latter case of weak orthogonality, 
where critial pairs are admitted provided they are trivial.

We start with a refinement of the by now classical Compression Lemma, 
as a tool for establishing infinitary confluence~$\CRinf$, 
and hence the infinitary normal form property~$\UNinf$,
for the case of weakly orthogonal TRSs that %are collapse-free
do not contain collapsing rewrite rules.

That this restriction of collapse-freeness is crucial,
is shown in an elaboration of a simple TRS which is weakly orthogonal, but has two collapsing rules.
It turns out that all the usual theory breaks down dramatically.

We conclude with establishing a positive fact:
the diamond property for infinitary developments for \woTRS{s}, 
by means of a detailed analysis initiated by \mbox{van Oostrom} for the finite case.
\end{abstract}

\section{Preliminaries}
An \emph{infinitary rewrite rule} is a pair $\pair{s}{t}$ with
$s \in \ter{\asig}$ and $t\in\iter{\asig}$ such that $s$ is not a variable and every variable
in $t$ occurs in $s$.
A rewrite rule  $\pair{s}{t}$ is \emph{left-linear} if no variable 
has more than one occurrence in $s$.

An \emph{infinitary term rewriting system (iTRS)} is a pair
$\pair{\asig}{R}$ consisting 
of a signature $\asig$ and a set $R$ of infinitary rewrite rules.
An iTRS is called \emph{weakly orthogonal}
if all rules are left-linear and all critical pairs $\pair{s_1}{s_2}$ are trivial ($s_1 \equiv s_2$).

As a preparation for Section~\ref{sec:confluence} we will prove the following lemma,
which is a refined version of the Compression Lemma in left-linear TRSs.
In its original formulation 
(e.g.\ see Theorem~12.7.1 on page~689 in~\cite{terese:2003}),
it states that strongly convergent rewrite sequences in left-linear TRSs
can be compressed to length less or equal to $\omega$.

\begin{lemma}[Refined Compression Lemma]
  \label{lem:compression}
  Let $R$ be a left-linear iTRS. % in which all rules have finite left-hand sides.
  Let $\aseq \funin s \to^\alpha_R t$ be a rewrite sequence, 
  $d$ the minimal depth of a step in $\aseq$, 
  and $n$ the number of steps at depth $d$ in $\aseq$.
  Then there exists a rewrite sequence $\aseq' \funin s \to^{\le \omega}_R t$
  in which all steps take place at depth $\ge d$, and where precisely
  $n$ steps contract redexes at depth~$d$.
\end{lemma}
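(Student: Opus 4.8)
The plan is to proceed by transfinite induction on the ordinal length $\alpha$ of the given rewrite sequence $\aseq\funin s\to^\alpha_R t$, strengthening the standard Compression Lemma proof to keep track of the depth-$d$ steps. The base case $\alpha=0$ is trivial, and the successor case $\alpha=\beta+1$ follows immediately from the induction hypothesis applied to the prefix of length $\beta$: compress that prefix to length $\le\omega$ with the depth/count bookkeeping intact, and then append the final step; if that final step is at depth $d$ we absorb it into the count, and a standard projection argument (pushing the single appended step back through the $\le\omega$-sequence, using left-linearity) restores a sequence of length $\le\omega$. The interesting case is the limit case $\alpha=\omega\cdot k$ or more generally a limit ordinal, where the heart of the compression argument lives.

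First I would record two elementary facts about a strongly convergent sequence $\aseq\funin s\to^\alpha_R t$: (1) the \emph{minimal} depth $d$ of any step is attained, and only \emph{finitely many} steps occur at depth exactly $d$ — indeed by strong convergence the depths of steps tend to infinity across every limit, so the set of steps at depth $\le d$ is finite; call this number $n$. (2) Consequently, after some proper prefix $\aseq_0\funin s\to^{\beta_0}_R s'$ containing all $n$ of the depth-$d$ steps (and possibly some deeper ones), the remaining tail $\aseq_1\funin s'\to^{\gamma}_R t$ consists entirely of steps at depth $>d$. The idea is then: the term $s'$ differs from $s$ only strictly below depth $d$ is false in general — rather, $s$ and $t$ and all intermediate terms agree on the "surface above depth $d$" only up to the finitely many depth-$d$ contractions. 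So I would instead split the work per subtree.

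The key technical step is the following \emph{subtree decomposition}. Since only $n$ steps occur at depth $d$ and finitely many at depths $<d$, there is a finite set $P$ of positions, all at depth $\le d$, such that every step of $\aseq$ is either one of these finitely many shallow steps or takes place strictly below one of finitely many positions at depth $d$ that are "stable" from some point on. More precisely, I would argue that $\aseq$ can be reorganized (using the Strip Lemma / finitary projections for left-linear systems, which commute disjoint and nested non-overlapping steps) so that first the $n$ steps at depth $d$ together with the finitely many steps at depth $<d$ are performed — this is a finite initial segment, of some finite length — and then, below each of the finitely many resulting depth-$d$ subterms, an independent strongly convergent sequence of steps at relative depth $\ge 1$ is performed. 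Each such independent sequence has length $<\alpha$ (or can be handled by the induction hypothesis on a smaller ordinal after noting its minimal step depth is $\ge d+1$), so by the induction hypothesis each compresses to length $\le\omega$. Finally, interleave the (finitely many) $\le\omega$-length subtree sequences at the pairwise disjoint positions into a single sequence of length $\le\omega$: a diagonal interleaving of finitely many $\le\omega$ sequences at disjoint positions is again strongly convergent of length $\le\omega$, since disjoint steps never interact and the depths still go to infinity. Prepending the finite initial segment (length $<\omega$) keeps the total length $\le\omega$, all steps after the initial segment are at depth $\ge d+1$, and the initial segment contributes exactly the $n$ steps at depth $d$ (plus shallower ones — but note the \emph{minimal} depth of $\aseq$ is $d$, so there are no steps of depth $<d$, and the count "$n$ steps at depth $d$" is preserved exactly).

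The main obstacle I anticipate is making the "reorganize so that the depth-$d$ steps come first" step rigorous in the infinitary setting: one cannot literally permute infinitely many steps past each other, so this must be done by a careful transfinite-induction-flavored projection argument rather than a naive permutation, and one must check that projecting the (finitely many) shallow steps to the front does not destroy strong convergence of the residual deep part — this is exactly where left-linearity is used (non-left-linear rules can duplicate a redex into infinitely many copies and wreck the depth bookkeeping). A secondary subtlety is the bookkeeping at limit ordinals that are not of the form $\omega\cdot k$: here one should first invoke the unrefined Compression Lemma cited from~\cite{terese:2003} to assume without loss of generality that $\alpha\le\omega$, and then the refinement only needs to rearrange an $\omega$-sequence, which makes the interleaving argument above entirely concrete.
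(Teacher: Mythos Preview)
Your successor case is essentially the paper's: compress the prefix by induction hypothesis, and if the result has length exactly $\omega$, use strong convergence to locate a finite index past which the final redex's pattern is untouched, perform the final step there, and interleave the residual subterm reductions back to length $\le\omega$.

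The limit case, however, has real gaps. First, the subtree decomposition does not give you a usable induction hypothesis: the restriction of $\aseq$ to one depth-$d$ subtree can perfectly well have ordinal length $\alpha$ (think $\alpha=\omega$ with infinitely many steps in that subtree), so ``length $<\alpha$'' is simply false, and ``minimal step depth $\ge d+1$'' does not hand you a smaller ordinal either --- the induction is on $\alpha$, not on $d$. Second, your fallback of first invoking the \emph{unrefined} Compression Lemma to assume $\alpha\le\omega$ is circular: the unrefined lemma gives no control over how many depth-$d$ steps survive compression, and establishing that the count is preserved is precisely the content of the refinement you are trying to prove. Third, the Strip Lemma projects a single step \emph{forward} past a long reduction; it does not by itself pull the depth-$d$ steps to the front of an $\alpha$-sequence.

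The paper's limit case avoids all of this with one observation you also made but did not exploit: since only finitely many steps are at depth $d$, the \emph{last} such step occurs at some index $\beta<\alpha$. Now the prefix $s\to^\beta s_\beta$ genuinely has length $<\alpha$, so the induction hypothesis applies and compresses it to $s\to^{\le\omega} s_\beta$ with exactly $n$ depth-$d$ steps; the last of these sits at a finite index, giving a finite initial segment $s\to^* s_m$ containing all $n$ depth-$d$ steps, followed by a tail of steps all at depth $>d$. Repeat the same argument for depth $d+1$ on that tail, then $d+2$, and so on; concatenating the resulting finite initial segments yields the desired $\le\omega$-sequence. No Strip Lemma, no subtree surgery, and the induction hypothesis is invoked only on strictly shorter prefixes.
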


\begin{proof}
  We proceed by transfinite induction on the ordinal length $\alpha$
  of rewrite sequences $\aseq \funin s \to^\alpha_R t$
  with $d$ the minimal depth of a step in $\aseq$, 
  and $n$ the number of steps at depth $d$ in $\aseq$.
  
  In case that $\alpha = 0$ %, as well as in case that $\alpha \le \omega$,
  nothing needs to be shown. 
  
  Suppose $\alpha$ is a successor ordinal.
  Then $\alpha = \beta + 1$ for some ordinal $\beta$, and
  $\aseq$ is of the form $s \to^{\beta} s' \to t$.
  Applying the induction hypothesis to $s \to^{\beta} s'$ 
  yields a rewrite sequence $s \to^{\gamma} s'$ of length $\gamma\le\omega$
  that contains the same number of steps at depth $d$, and no steps
  at depth less than $d$.
  
  If $\gamma < \omega$, then $s \to^\gamma s' \to t$ is a rewrite sequence of
  length $\gamma + 1 < \omega$, in which all steps take place at depth
  $\ge d$ and precisely $n$ steps %contract redexes 
								  at depth $d$.
  
  If $\gamma = \omega$, we obtain a rewrite sequence of the form 
  $s \equiv s_0 \to s_1 \to \ldots \to^\omega s_\omega \to t$.
  Let $\ell \to r \in R$ be the rule applied in the final step $s_\omega \to t$,
  that is, $s_\omega \equiv \contextfill{\acontext}{\subst{\asubst}{\ell}} \to \contextfill{\acontext}{\subst{\asubst}{r}} \equiv t$ 
  for some context $\acontext$ and substitution $\asubst$.
  Moreover, let $d_h$ be the depth of the hole in $\acontext$, and $d_p$ the depth of the pattern of $\ell$.
  Since the reduction $s_0 \to^\omega s_\omega$ is strongly convergent, 
  there exists $n \in \nat$ such that all rewrite steps in $s_n \to^\omega s_\omega$ have depth $> d_h + d_p $,
  and hence are below the pattern of the redex contracted in the last step
  $s_\omega \to t$.
  As a consequence, there exists a substitution $\bsubst$
  such that $s_n \equiv \contextfill{\acontext}{\subst{\bsubst}{\ell}}$, and
  since $s_n \equiv \contextfill{\acontext}{\subst{\bsubst}{\ell}} 
		 \to^\omega \contextfill{\acontext}{\subst{\asubst}{\ell}} \equiv s_\omega$,
  it follows that $\myall{x \in \vars{\ell}}{\funap{\bsubst}{x} \to^{\le \omega} \funap{\asubst}{x}}$.
  We now prepend the final step $s_\omega \to t$ to $s_n$, that is:
  $s_n \equiv \contextfill{\acontext}{\subst{\bsubst}{\ell}} \to \contextfill{\acontext}{\subst{\bsubst}{r}}$.
  Even if $r$ is an infinite term, this creates at most $\omega$-many copies of subterms $\funap{\bsubst}{x}$ 
  with reduction sequences $\funap{\bsubst}{x} \to^{\le \omega} \funap{\asubst}{x}$ of length $\le \omega$.
  Since these reductions are independent of each other there exists an interleaving 
  $\contextfill{\acontext}{\subst{\bsubst}{r}} \to^{\le \omega} \contextfill{\acontext}{\subst{\asubst}{r}}$
  of length at most $\omega$ (the idea is similar to establishing countability of $\omega^2$ by dovetailing).
  Hence we obtain a rewrite sequence
  $\aseq' \funin s \to^{\le \omega} t$,
  since
  $s \to^n s_n \equiv \contextfill{\acontext}{\subst{\bsubst}{\ell}}
   \to \contextfill{\acontext}{\subst{\bsubst}{r}} \to^{\le \omega} \contextfill{\acontext}{\subst{\asubst}{r}} \equiv t$.
  It remains to be shown that $\aseq'$ contains only steps at depth $\ge d$, 
  and that it has the same number of steps as the original sequence $\aseq$ 
  at depth $d$.
	% If $d_h > d$, then this follows from the induction hypo
  This follows from the induction hypothesis and
  the fact that all steps in $s_n \to^\omega s_\omega$ 
  have depth $> d_h + d_p$ and thus also all steps of the interleaving
  $\contextfill{\acontext}{\subst{\bsubst}{r}} \to^{\le \omega} \contextfill{\acontext}{\subst{\asubst}{r}}$
  have depth $> d_h + d_p - d_p = d_h \ge d$ 
  (the application of $\ell \to r$ can lift steps 
   at most by the pattern depth $d_p$ of $\ell$).
  
  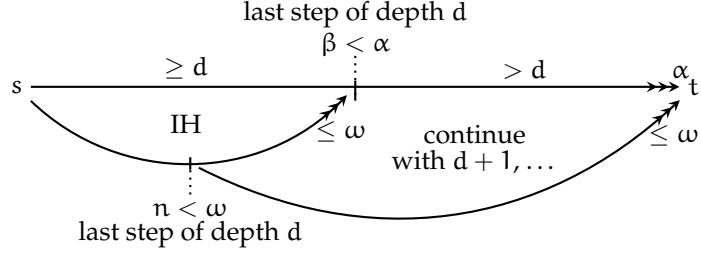
\begin{figure}[hpt!]
  \begin{center}
  \begin{tikzpicture}
	[inner sep=1mm,
	 node distance=45mm,
	 terminal/.style={
	   rectangle,rounded corners=2.25mm,minimum size=4mm,
	   very thick,draw=black!50,top color=white,bottom color=black!20,
	 },
	 >=stealth,
	 red/.style={thick,->>},
	 infred/.style={
	   thick,
	   shorten >= 1mm,
	   decoration={
		 markings,
		 mark=at position -3mm with {\arrow{stealth}},
		 mark=at position -1.5mm with {\arrow{stealth}},
		 mark=at position -0.001 with {\arrow{stealth}}
	   },
	   postaction={decorate}
	 }]
  
	\node (s) {$s$};
	\node (s') [right of=s] {};
	\node (sb) [node distance=1mm,below of=s] {\hphantom{$s$}};
	\node (s'b) [node distance=1mm,below of=s'] {};
	\node (t) [right of=s'] {$t$};
  
	\draw [infred] (s) -- (t) node [at end,above] {$\alpha$};
	\draw [thick] ($(s') + (0,.5ex)$) -- ($(s'b) + (0,-.5ex)$);
	\draw [thick,dotted] ($(s') + (0,2.5ex)$) -- ($(s'b) + (0,-.5ex)$) 
		  node [midway,yshift=5.2ex] {last step of depth $d$}
		  node [midway,yshift=2.7ex] {$\beta < \alpha$};
	\draw [infred] (s) to [bend right=45] (s') node [at end,below,xshift=-.5ex,yshift=-1.4ex] {$\le \omega$}
		  node [midway] (d) {};
  
	\node at ($(s)!.5!(s')$) [anchor=south] {$\ge d$};
	\node at ($(s')!.5!(t)$) [anchor=south] {$> d$};
  
	\draw [thick] ($(d) + (0,.5ex)$) -- ($(d) + (0,-.5ex)$);
	\draw [thick,dotted] ($(d) + (0,-2.5ex)$) -- ($(d) + (0,.5ex)$) 
		  node [midway,yshift=-4.7ex] {last step of depth $d$}
		  node [midway,yshift=-2.7ex] {$n < \omega$};
  
	\draw [infred] (d) to [bend right=35] (t) node [at end,below,xshift=-.5ex,yshift=-1.4ex] {$\le \omega$};
	\node at ($(s)!.5!(s')$) [yshift=-3ex] {IH};
	\node at ($(s')!.5!(t)$) [xshift=-4ex,yshift=-3.8ex] {continue};
	\node at ($(s')!.5!(t)$) [xshift=-4ex,yshift=-6ex] {with $d+1$, \ldots};
  \end{tikzpicture}
  \end{center}\vspace{-3ex}
  \caption{\textit{Compression Lemma, in case $\alpha$ is a limit ordinal.}}
  \label{fig:compression}
  \end{figure}
  
  Finally, suppose that $\alpha$ is a limit ordinal $> \omega$.
  We refer to Figure~\ref{fig:compression} for a sketch of the proof.
  Since $\aseq$ is strongly convergent, only a finite number of steps
  take place at depth $d$. Hence there exists $\beta < \alpha$ such that
  $s_{\beta}$ is the target of the last step at depth $d$ in $\aseq$.
  We have $s \to^\beta s_\beta \to^{\le \alpha} t$ 
  and all rewrite steps in $s_\beta \to^{\le \alpha} t$ are at depth $> d$.
  By induction hypothesis there exists 
  a rewrite sequence $\bseq \funin s \to^{\le \omega} s_\beta$
  containing an equal amount of steps at depth $d$ as $s \to^\beta s_\beta$.
  Consider the last step of depth $d$ in $\bseq$\,. 
  This step has a finite index $n < \omega$.
  Thus we have $s \to^* s_{n} \to^{\le \alpha} t$, 
  and all steps in $s_n \to^{\le \alpha} t$ are at depth $> d$. 
  By successively applying this argument to $s_n \to^{\le \alpha} t$ 
  we construct finite initial segments $s \to^* s_n$ 
  with strictly increasing minimal rewrite depth $d$. 
  Concatenating these finite initial segments 
  yields a reduction $s \to^{\le \omega} t$
  containing as many steps at depth $d$ as the original sequence.
\end{proof}

With this refined compression lemma we now prove that
also divergent rewrite sequences can be compressed to length less or equal
to $\omega$.

\begin{theorem}
  \label{thm:comp:div:seqs}
  Let $R$ be a left-linear iTRS. %  in which all rules have finite left-hand sides.
  For every divergent rewrite sequence $\aseq \funin s \to^\alpha_R$ of length $\alpha$
  there exists a divergent rewrite sequence
  $\aseq' \funin s \to^{\le\omega}_R$ of length less or equal to $\omega$.
\end{theorem}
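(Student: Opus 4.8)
The plan is to derive the theorem from the Refined Compression Lemma (Lemma~\ref{lem:compression}) by iterating its key construction $\omega$ times. Two preliminary remarks set the stage. First, a divergent rewrite sequence of length $\le\omega$ must have length exactly $\omega$, and such a sequence is divergent precisely when it is \emph{not} strongly convergent, i.e.\ when there is a depth $d$ at which infinitely many of its steps take place; so it suffices, given the divergent $\aseq : s \to^\alpha_R$, to construct an infinite rewrite sequence from $s$ that performs infinitely many steps at one fixed depth. Second, let $d$ be the least depth such that cofinally many steps of $\aseq$ take place at depth $\le d$ — this exists exactly because $\aseq$ is not strongly convergent — and note that by minimality of $d$ there is an ordinal $\gamma < \alpha$ beyond which every step of $\aseq$ lies at depth $\ge d$.

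I would first reduce to a clean situation. Compress the strongly convergent initial segment $\aseq_{<\gamma}$ by Lemma~\ref{lem:compression}; the result has length $\le\omega$, hence only finitely many steps at depth $< d$, so splitting it just after the last such step exhibits it as a finite reduction $s \to^* w$ followed by a reduction $w \to^{\le\omega} s_\gamma$ all of whose steps lie at depth $\ge d$. Concatenating the latter with the tail $\aseq_{\ge\gamma}$ yields a divergent sequence $\tau$ from $w$, all of whose steps lie at depth $\ge d$ and cofinally many of which lie at depth exactly $d$. It then remains to turn $\tau$ into a divergent sequence of length $\le\omega$ from $w$; prepending the finite reduction $s \to^* w$ proves the theorem.

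This last step I would carry out by building the required sequence in $\omega$ rounds, maintaining the invariant that the current term is the source of a divergent sequence all of whose steps lie at depth $\ge d$ and cofinally many at depth exactly $d$. In one round, consider the first step at depth $d$ of the current witnessing sequence, say a contraction at a position $P$ of depth $d$ using a rule $\ell \to r$ whose pattern has depth $p$. The segment leading up to this step is strongly convergent with all steps at depth $>d$, so by Lemma~\ref{lem:compression} it compresses to length $\le\omega$, and — crucially — that lemma also guarantees that the compressed segment, like the original, has all its steps at depth $>d$. By strong convergence only finitely many steps of this compressed segment occur at depth $\le d+p$; performing exactly those steps produces a term $c$ whose top $d+p$ levels have already stabilised, so $c$ still carries an $\ell$-redex at $P$, and contracting it is a single step at depth $d$. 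The remaining, deeper steps of the compressed segment, together with the copies they induce inside the occurrences of the variables of $\ell$ in $r$, are mutually independent reductions of length $\le\omega$; dovetailing them all into one reduction of length $\le\omega$ exhibits the term reached after the depth-$d$ contraction as again the source of a divergent sequence of the required shape (namely that dovetailed reduction followed by the remainder of the old witness after its first depth-$d$ step), so the invariant is restored. Concatenating the $\omega$ finite rounds produces a rewrite sequence of length $\omega$ with infinitely many steps at depth $d$, which is divergent.

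The delicate point, and the main obstacle, is exactly this dovetailing together with the verification that the invariant survives a round: that performing only a finite initial part of each intermediate (strongly convergent) reduction already suffices to expose the next depth-$d$ redex; that the postponed deeper reductions — which for a collapsing rule $\ell\to r$ with $r$ a variable can themselves reach back down to depth $d$ — still interleave into a single strongly convergent reduction; and that the compressed segments retain the property of having all steps at depth $>d$, which is where the refinement of the Compression Lemma ensuring that compression introduces no step shallower than the original minimal depth is genuinely used. The bookkeeping is essentially that of the successor and limit cases of the proof of Lemma~\ref{lem:compression}, performed $\omega$ times in succession and glued together by a single dovetailing; keeping each round \emph{finite} — rather than of length $\le\omega$, which would make the total length $\omega^2$ — is what makes the construction land at length $\omega$.
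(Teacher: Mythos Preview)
Your proof is correct and follows essentially the same strategy as the paper: isolate the minimal depth $d$ that is hit infinitely often, reduce to a tail in which all steps lie at depth~$\ge d$, and then iterate Lemma~\ref{lem:compression} to peel off finite chunks each containing a depth-$d$ step. The one place you work harder than necessary is inside each round: you compress the segment \emph{before} the first depth-$d$ step and then manually commute that step forward past the remaining compressed tail via dovetailing --- essentially re-running the successor case of the proof of Lemma~\ref{lem:compression}. The paper instead compresses the segment up to and \emph{including} the first depth-$d$ step; since the lemma preserves not only the minimal depth but also the \emph{number} of steps at that depth, the compressed segment contains exactly one depth-$d$ step, sitting at some finite index~$m$, and the length-$m$ prefix is the desired chunk directly --- no swap or dovetailing needed. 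In other words, your route uses only the depth-preservation half of the refinement, while the paper's shortcut exploits the count-preservation half as well.
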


\begin{proof}
  Let $\aseq \funin s \to^\alpha_R$ be a divergent rewrite sequence.
  Then there exist $k\in\nat$ such that infinitely many steps in $\aseq$
  take place at depth $k$.
  Let $d$ be the minimum of all numbers $k$ with that property.
  Let $\beta$ be the index of the last step above depth $d$ in 
  $\aseq$, $\aseq \funin s \to^\beta s_\beta \to^{\le \alpha}$.
  Then by Lemma~\ref{lem:compression} the rewrite sequence
  $s \to^\beta s_\beta$ can be compressed to a rewrite sequence
  $s \to^{\le \omega} s_\beta$ such that
  $s_\beta \to^{\le \alpha}$ consists only of steps at depth $\ge d$,
  among which infinitely many steps are at depth $d$.
  Let $n$ be the index of the last step of depth $\le d$ in the rewrite sequence $s \to^{\le \omega} s_\beta$.
  Then $s \to^* s_n \to^{\le \omega} s_\beta \to^{\le \alpha}$,
  and $s_n \to^{\le \omega} s_\beta \to^{\le \alpha}$ contains only steps at depth $\ge d$.
  Thus all steps with depth less than $d$ occur in the finite prefix
  $s \to^* s_n$.

  Now consider the rewrite sequence 
  $\aseq_1 \funin s_n \relcomp{\to^{\le \omega}}{\to^{\le \alpha}}$,
  say $\aseq_1 \funin s_n \to^\gamma$ for short,
  containing infinitely many steps at depth $d$.
  Let $\gamma'$ be the first step at depth $d$ in $\aseq_1$.
  Then $\aseq_1 \funin s_n \to^{\gamma'} u \to^{\le \gamma}$ for some term $u$
  and $s_n \to^{\gamma'} u$ can be compressed to $s_n \to^{\le \omega} u$ 
  containing exactly one step at depth $d$.
  Now let $m$ be the index of this step, then $s_n \to^m u' \to^{\le \omega} u \to^{\le \gamma}$
  where $s_n \to^m u'$ contains one step at depth $d$.
  Repeatedly applying this construction to $u' \to^{\le \omega} u \to^{\le \gamma}$
  we obtain a divergent rewrite sequence 
  $\aseq' \funin s \to^* s_n \to^* u' \to^* u'' \to \ldots$ 
  that contains infinitely many steps at depth $d$,
  and hence is divergent.
\end{proof}

\begin{remark}
  A slightly weaker version of Lemma~\ref{lem:compression}, as well as 
  Theorem~\ref{thm:comp:div:seqs} was formulated by the second author
  in a private communication with Hans Zantema. These statements have been published,
  in a reworked form, in~\cite{zant:2008} (see Lemma~3 and Theorem~4 there).
  The weaker version of Lemma~\ref{lem:compression}, which is sufficient
  to obtain a proof of Theorem~\ref{thm:comp:div:seqs}, states the following:
  Every strongly convergent rewrite sequence $\aseq \funin s \to^\alpha_R t$
  with $d$ the minimal depth of its steps can be compressed 
  to a rewrite sequence $\aseq' \funin s \to^{\le\omega}_R t$ of length $\le \omega$ 
  with the same or more steps at minimal depth $d$.

  We note that very closely related statements 
  have been formulated for infinitary CRSs (Combinatory Reduction Systems)
  by Jeroen Ketema in~\cite{kete:2008} (see Theorem~2.7 and Lemma 5.2 there).
\end{remark}

\section{Infinitary Unique Normal Forms}
\newcommand{\spterm}{$\ssuc\spre$-term}

In~\cite{klop:vrij:2005}, Klop and de Vrijer have shown 
that every orthogonal TRS exhibits the infinitary unique normal forms ($\UNinf$) property.
By way of contrast, we will now give a counterexample showing that the 
$\UNinf$ property does \emph{not} generalize to weakly orthogonal TRSs.
The counterexample is very simple: its signature consists of 
the unary symbols $\spre$ and $\ssuc$ with the reduction rules:
\begin{align*}
  \pre{\suc{x}} &\to x &
  \suc{\pre{x}} &\to x
  \punc.
\end{align*}
It is easily seen that this TRS is indeed weakly orthogonal.

Using $\ssuc$ and $\spre$ we have infinite terms such as 
$\ssuc\spre\ssuc\spre\ssuc\spre\ldots$,
where we drop the brackets that are associating to the right.
We write $\ssuc^\omega$ for $\ssuc\ssuc\ssuc\ldots$
and $\spre^\omega$ for $\spre\spre\spre\ldots$.
In fact,  $\ssuc^\omega$ and $\spre^\omega$ are the only infinite normal forms.

Given an infinite \spterm~$t$ we can plot in a graph 
the surplus number of $\ssuc$'s of $t$ when traversing 
from the root of $t$ downwards to infinity 
(or to the right if $t$ is written horizontally). % :)
This graph is obtained by counting $\ssuc$ for $+1$ and $\spre$ for $-1$.
We define $\funap{\mrm{sum}}{t,n}$ as the result of this counting up to depth $n$ in the term $t$.
For $t=\ssuc\spre\ssuc\spre\ssuc\spre\ldots$
the graph takes values, consecutively, $1,0,1,0,\ldots$, 
while for $t=\ssuc^\omega$ the excess number is $1,2,3,\ldots$,
for $t=\spre^\omega$ we have $-1,-2,-3,\ldots$.

\begin{figure}[ht!]
  \begin{center}
    \scalebox{.5}{\includegraphics{./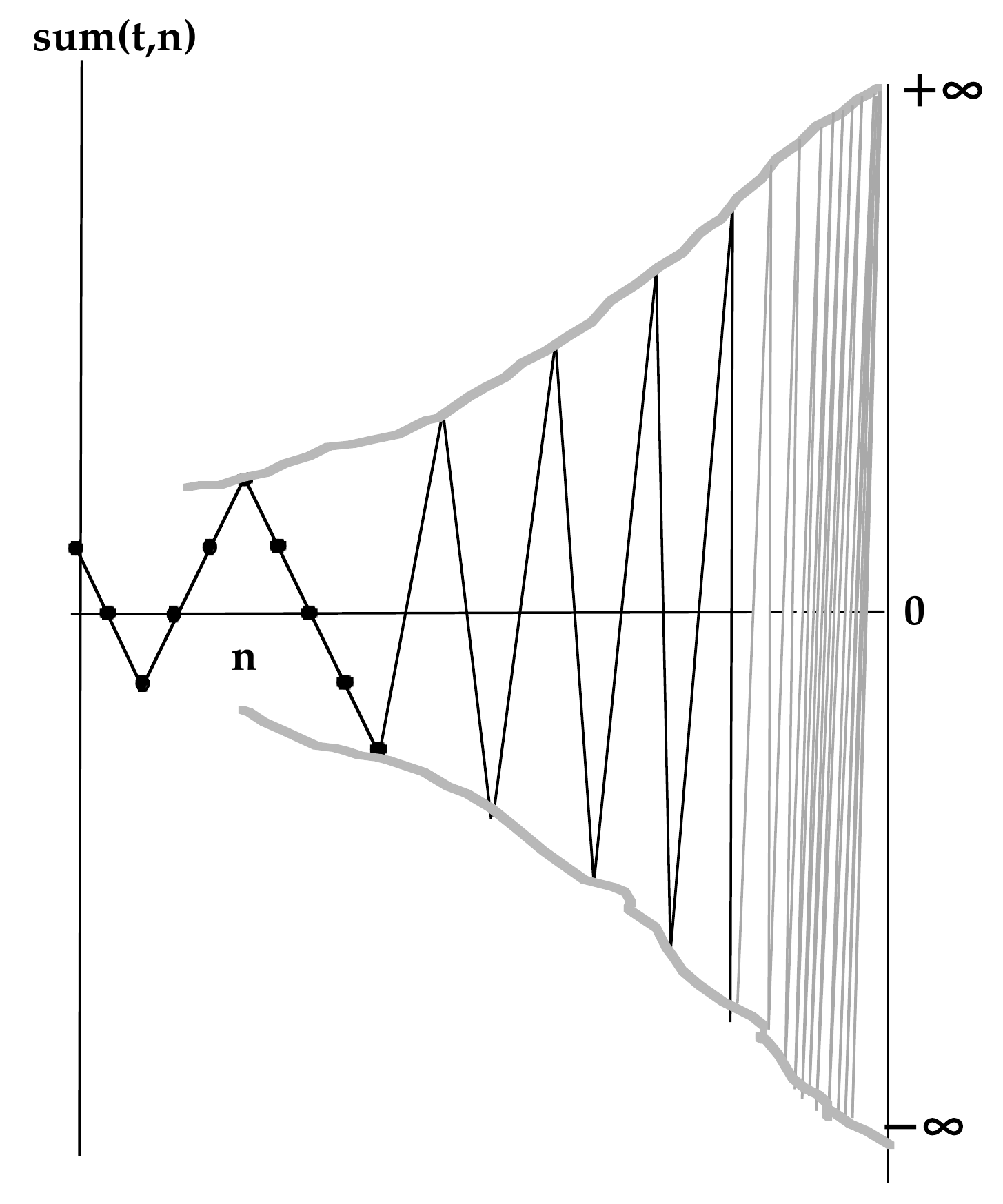}}
  \end{center}
  \caption{\textit{%
    Graph for the oscillating \spterm\ 
    $q = \ssuc^1\,\spre^2\,\ssuc^3\,\ldots$\,. 
  }}
  \label{fig:SP:graph}
\end{figure}

The upper and lower $\ssuc$-height of $t$ are defined as the supremum 
and infimum obtained by this graph,
i.e., $\sup_{n \in \nat} \funap{\mrm{sum}}{t,n}\}$ and $\inf_{n \in \nat} \funap{\mrm{sum}}{t,n}\}$,
respectively.
So the upper (lower) $\ssuc$-height of $(\ssuc\spre)^\omega$
is $1$ ($0$), of $\ssuc^\omega$ it is $\infinity$ ($0$), and of $\spre^\omega$ $0$ it is ($-\infinity$).

%\samepage{
Now we have:
\begin{proposition}
  \hfill
  \begin{enumerate}
    \item $t \to^{\omega} \ssuc^\omega$ if and only if the upper $\ssuc$-height of $t$ is $\infinity$,
    \item $t \to^{\omega} \spre^\omega$ if and only if the lower $\ssuc$-height of $t$ is $-\infinity$.
  \end{enumerate}
\end{proposition}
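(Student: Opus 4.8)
The plan is to exploit the symmetry of the system. The involution $\overline{(\cdot)}$ on $\ssuc\spre$-terms that interchanges the symbols $\ssuc$ and $\spre$ is an automorphism of the rewrite system — it maps each of the two rules to the other, and it preserves depths — so $t\to^{\omega} v$ iff $\overline{t}\to^{\omega}\overline{v}$; moreover $\funap{\mrm{sum}}{\overline{t},n} = -\funap{\mrm{sum}}{t,n}$ for all $n$, so the lower $\ssuc$-height of $t$ is minus the upper $\ssuc$-height of $\overline{t}$, and $\overline{\str{\spre}} = \str{\ssuc}$. Hence statement (ii) for $t$ is exactly statement (i) for $\overline{t}$, and it suffices to prove (i). Throughout, write $U(t)$ for the upper $\ssuc$-height $\sup_{n\in\nat}\funap{\mrm{sum}}{t,n}$, and regard an infinite $\ssuc\spre$-term as the infinite word $w_1 w_2 w_3\cdots$ over $\{\ssuc,\spre\}$ obtained by reading it from the root; then the values $s_n := \funap{\mrm{sum}}{t,n}$ are the partial sums of $\pm1$'s, with $s_0 = 0$ and $|s_{n+1}-s_n| = 1$.

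For the implication from left to right, I would show that $U$ never increases along a rewrite sequence of length $\le\omega$. For a single step $u\to u'$: the contracted redex sits below a context which, the signature being unary, is simply a word of some length $d$ with net count $h$ (so $h = \funap{\mrm{sum}}{\cdot,d}$ on both sides of the step, and $h$ is attained by $u'$ at depth $d$); a short computation then shows that the set of values $\{\funap{\mrm{sum}}{u,n} : n \ge 0\}$ equals $\{\funap{\mrm{sum}}{u',n} : n\ge 0\}$ together with exactly one further value, namely $h-1$ if the redex is $\spre\ssuc$ and $h+1$ if it is $\ssuc\spre$; in both cases this yields $U(u')\le U(u)$ (with equality in the $\spre\ssuc$ case, since $h-1 < h \le U(u')$). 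For the passage to a limit of a strongly convergent sequence $t = u_0 \to u_1\to\cdots\to u_{\omega}$: given $m$, strong convergence makes the depth-$m$ prefix eventually constant, so $\funap{\mrm{sum}}{u_\omega,m} = \funap{\mrm{sum}}{u_N,m}\le U(u_N)\le U(t)$ for large $N$, and taking the supremum over $m$ gives $U(u_\omega)\le U(t)$. Applied to $t\to^{\omega}\str{\ssuc}$ this gives $\infinity = U(\str{\ssuc})\le U(t)$.

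For the converse, suppose $U(t)=\infinity$. The key (if nearly trivial) observation is that deleting a finite prefix of the partial-sum sequence leaves its supremum equal to $\infinity$; hence I can choose indices $0 = n_0 < n_1 < n_2 < \cdots$ recursively by letting $n_{j+1}$ be the least $n>n_j$ with $s_n = j+1$ — such $n$ exists, and $s_{n_{j+1}} = j+1$ because $s$ moves in unit steps. The finite blocks $w^{(j)} := w_{n_{j-1}+1}\cdots w_{n_j}$ then each have net count $s_{n_j}-s_{n_{j-1}} = 1$, and any finite $\ssuc\spre$-word with net count $1$ rewrites, by repeatedly contracting an adjacent $\ssuc\spre$- or $\spre\ssuc$-pair, in finitely many steps to its unique normal form $\ssuc$. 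I would then carry out these block reductions consecutively — all of $w^{(1)}$, then all of $w^{(2)}$, and so on: no redex used ever straddles a block boundary, and while block $j$ is being reduced it is preceded by the $j-1$ copies of $\ssuc$ already produced, so every step of that phase occurs at depth $\ge j$. Consequently the concatenated reduction is strongly convergent, has length $\le\omega$, and its limit is $\ssuc\ssuc\ssuc\cdots = \str{\ssuc}$; that is, $t\to^{\le\omega}\str{\ssuc}$, as required.

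I expect the two bookkeeping cores to be the main obstacles: on the one hand, making precise how contracting a redex changes the value set of $\mrm{sum}$ and then pushing the resulting inequality through the limit via strong convergence; on the other hand, checking that the block-by-block reduction in the converse direction is genuinely well-defined, strongly convergent, and has limit $\str{\ssuc}$ — all of which rests on the elementary fact that $\sup_n s_n = \infinity$ survives the removal of any finite prefix.
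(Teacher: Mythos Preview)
The paper does not actually prove this proposition --- its ``proof'' reads, in full, \emph{Left for Roel} (a wink at the dedication). So there is nothing to compare against; your argument stands on its own, and it is correct and well organised. The symmetry reduction via the $\ssuc\leftrightarrow\spre$ involution is clean; the single-step analysis of how the value set of $\mrm{sum}$ changes (losing exactly the value $h{+}1$ for an $\ssuc\spre$-contraction and $h{-}1$ for a $\spre\ssuc$-contraction) is accurate and immediately gives $U(u')\le U(u)$; pushing this through the limit via strong convergence is the right move; and the block decomposition for the converse --- choosing the first-passage times of the partial sums to $1,2,3,\ldots$ and reducing each net-count-$1$ block to $\ssuc$ in turn --- is exactly the natural construction, with the depth bound for phase $j{+}1$ giving strong convergence to $\str{\ssuc}$.

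One cosmetic remark: the proposition writes $\to^{\omega}$, whereas your construction (correctly) yields $\to^{\le\omega}$ --- for instance $t=\ssuc\spre\,\str{\ssuc}$ reaches $\str{\ssuc}$ in a single step. This is a quirk of the paper's informal notation here rather than a defect in your proof.
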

\begin{proof}
   Left for Roel.
\end{proof}
%}

Now let us take a term $q$ with upper $\ssuc$-height $\infinity$ \emph{and} lower $\ssuc$-height $-\infinity$\,!
Then $q$ reduces to both $\ssuc^\omega$ and $\spre^\omega$, both normal forms.
Hence $\UNinf$ fails.
Is there indeed such a $q$? 
Yes there is (see also Figure~\ref{fig:SP:graph}):
\begin{align*}
  q & =
  \ssuc \, \spre\spre \, \ssuc\ssuc\ssuc \, \spre\spre\spre\spre \, 
  \ssuc\ssuc\ssuc\ssuc\ssuc \, \spre\spre\spre\spre\spre\spre \, \ldots
  %\label{counterexample}
\end{align*}

To see that $q$ indeed reduces to both $\ssuc^\omega$ and $\spre^\omega$:
shifting the ``$\spre$-blocks to the right, so that they are `absorbed'
by the $\ssuc$-blocks, yields ever more $\ssuc$'s.
On the other hand, shifting the $\ssuc$-blocks to the right 
so that they are absorbed by the $\spre$-blocks, leaves infinitely many $\spre$'s.

%\begin{samepage}
The failure of $\UNinf$ for two collapsing rules raises the following question:
\begin{question}\label{q:one}
  What if we admit only \emph{one} collapsing rule? 
\end{question}
\noindent
We leave this question to future work.
%\end{samepage}

{
% roman font used only to let these symbols correspond more to the ones used in the picture
\newcommand{\A}{\mrm{A}}
\newcommand{\B}{\mrm{B}}
\newcommand{\RA}{\mrm{RA}}
We will now provide a more detailed analysis of various classes of \mbox{\spterm{s}}.
In Figure~\ref{fig:SP:Venn}, 
\begin{figure}[ht!]
  \begin{center}
    \scalebox{.5}{\includegraphics{./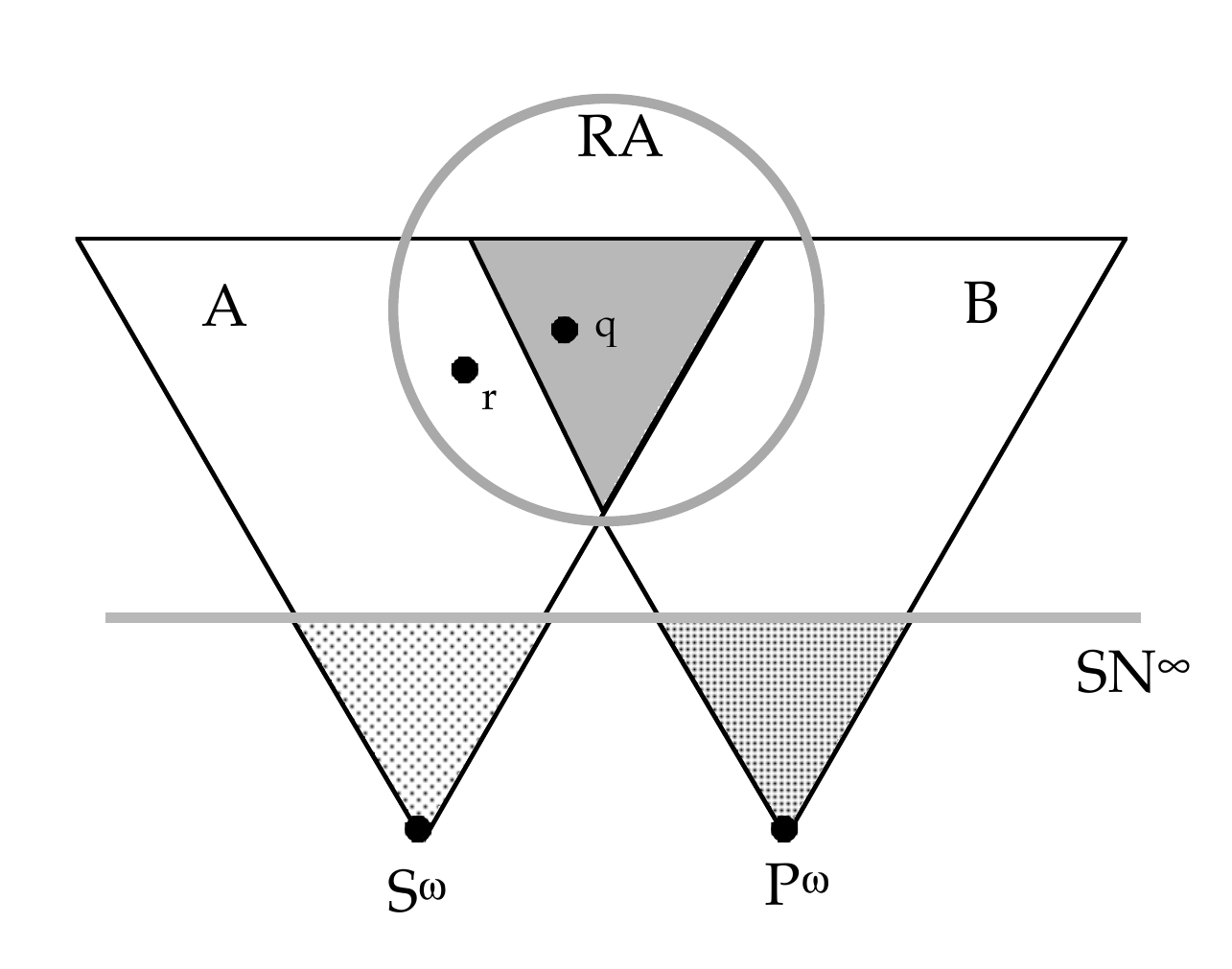}}
  \end{center}
  \caption{\textit{Classes of infinite \spterm{s}.}}
  \label{fig:SP:Venn}
\end{figure}
the extension of these classes is pictured.
Here $\A$ is the set of infinite \spterm{s} reducing to $\ssuc^\omega$,
and $\B$ that of those reducing to $\spre^\omega$ with a shaded non-empty intersection
containing the counterexample term~$q$ mentioned above.
The term $r = \ssuc \, \spre \, \ssuc^2 \, \spre^2 \, \ssuc^3 \, \spre^3 \, \ldots$
is an element of $(\A\setminus\B)\cap\RA$, where $\RA$ is the set of 
root-active terms.
The dotted part ${}\subseteq{\A\cup\B}$ is $\SNinf$.
The set $\RA$ is characterized as follows:
\newpage
\begin{proposition}
  \hfill
  \begin{enumerate}
    \item 
      An \spterm~$t$ is root-active if and only if the partial sums 
      $\funap{\mrm{sum}}{t,n}$ equal $0$ for infinitely many $n$.
      Equivalently:
    \item
      An \spterm~$t$ is root-active if and only if $t$ is the concatenation
      of infinitely many `finite zero words' $t_0,t_1,t_2,\ldots$. 
	  Here a zero word is a finite \spterm{} with the same number of $\ssuc$'s and $\spre$'s.  
	  If $w$ is a zero word, then $w$ clearly  reduces to the empty word.
  \end{enumerate}
\end{proposition}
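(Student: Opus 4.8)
The plan is to recast everything in terms of the partial‑sum function $\funap{\mrm{sum}}{t,\cdot}$, using three elementary observations. First, an \spterm{} $u$ has a redex at its root exactly when its first two symbols form $\ssuc\spre$ or $\spre\ssuc$, i.e.\ when $\funap{\mrm{sum}}{u,2}=0$. Second, contracting \emph{any} redex of a term $u$ just deletes an adjacent pair $\ssuc\spre$ or $\spre\ssuc$; so if the contracted redex sits at the positions of depth $i$ and $i+1$, then $\funap{\mrm{sum}}{u',n}=\funap{\mrm{sum}}{u,n}$ for $n\le i-1$ and $\funap{\mrm{sum}}{u',n}=\funap{\mrm{sum}}{u,n+2}$ for $n\ge i-1$ (the two clauses agree at $n=i-1$ because the contracted symbols cancel). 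In graph terms, a rewrite step deletes a single ``bump''. Third, a finite zero word $w$ of length $2m\ge 2$ always contains a redex --- otherwise it is $\ssuc^{2m}$ or $\spre^{2m}$, which is not balanced --- and contracting it leaves a zero word of length $2m-2$; iterating, $w$ passes through a zero word of length exactly $2$, namely $\ssuc\spre$ or $\spre\ssuc$, before collapsing to the empty word.

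Applying the third observation inside a term, a zero‑word prefix of $t$ of length $n\ge 2$ can be reduced --- by steps strictly above the rest of $t$ --- until the first two symbols differ; hence \emph{if} $\funap{\mrm{sum}}{t,n}=0$ for some $n\ge 2$ \emph{then} $t$ reduces to a term whose root is a redex. Conversely, if $t\to^{*}s$ with $s$ such a term, then running the second observation backwards along the reduction (one step cannot destroy the property ``$\funap{\mrm{sum}}{\cdot,k}=0$ for some $k\ge 2$'') yields $\funap{\mrm{sum}}{t,n}=0$ for some $n\ge 2$. So I obtain the key equivalence: $t$ reduces to a (root‑)redex iff $\funap{\mrm{sum}}{t,n}=0$ for some $n\ge 2$. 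The equivalence of~(i) and~(ii) is then bookkeeping: writing the infinitely many depths with $\funap{\mrm{sum}}{t,n}=0$ as $0=n_0<n_1<n_2<\cdots$, the factor $t_i$ of $t$ spanning depths $n_i+1,\dots,n_{i+1}$ has equally many $\ssuc$'s and $\spre$'s, so $t=t_0t_1t_2\cdots$ displays $t$ as a concatenation of non‑empty zero words; conversely such a concatenation has $\funap{\mrm{sum}}{t,\cdot}$ vanishing at every block boundary.

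For the main equivalence I unfold root‑activity as: every reduct $t'$ of $t$ reduces to a (root‑)redex, i.e.\ by the key equivalence, every reduct $t'$ has $\funap{\mrm{sum}}{t',n}=0$ for some $n\ge 2$. For the direction ``(i) $\Rightarrow$ root‑active'' I would show that property~(i) itself, ``$\funap{\mrm{sum}}{\cdot,n}=0$ for infinitely many $n$'', is preserved under reduction: by the second observation, after one step the zeros of $u'$ still contain all but finitely many zeros of $u$, shifted down by $2$, hence there are still infinitely many; for limit stages one uses that along a strongly convergent reduction the graph $\funap{\mrm{sum}}{\cdot,\cdot}$ freezes on every fixed initial block of depths, and reduces the transfinite case to length $\le\omega$ via the Compression Lemma (Lemma~\ref{lem:compression}). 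Then every reduct still has a zero at some depth $\ge 2$, so it reduces to a redex. For ``not~(i) $\Rightarrow$ not root‑active'' I argue by contraposition: if $\funap{\mrm{sum}}{t,n}=0$ for only finitely many $n$, let $M$ be the largest such depth; the length‑$M$ prefix of $t$ is a zero word, which I reduce away to reach the reduct $t'$ formed by the part of $t$ below depth $M$. Then $\funap{\mrm{sum}}{t',n}=\funap{\mrm{sum}}{t,M+n}\ne 0$ for all $n\ge 1$, so by the key equivalence $t'$ does not reduce to a redex, witnessing that $t$ is not root‑active.

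The main obstacle is the preservation of ``$\funap{\mrm{sum}}=0$ infinitely often'' through \emph{infinitary} reducts in the first direction: a single step is harmless, but across an $\omega$‑sequence one must check that the zeros are not all pushed off to depth $\infty$, and this is exactly where strong convergence (the depths of contracted redexes tend to $\infty$, so each initial block of the sum‑graph is eventually frozen) carries the argument. One should also fix the intended reading of ``reduct'' in the definition of root‑activity: if it refers to finite reducts only, this last paragraph collapses to the one‑step case and the whole proof is routine; if infinitary reducts are admitted, the convergence bookkeeping just sketched is needed.
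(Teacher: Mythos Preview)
Your proof is correct under the standard (finite-reduct) reading of root-activity, but the route differs from the paper's. The paper argues by a one-line labelling trick: number the symbols of $t$ as $0,1,2,\ldots$ from the root, observe that the labels of the surviving symbols stay in increasing order under rewriting, and note that whenever the symbol labelled $i$ arrives at the root, the original prefix $w_i$ of length $i$ has been entirely erased --- one $\ssuc$ and one $\spre$ at a time --- so $w_i$ is a zero word. Root-activity supplies infinitely many root contractions, each exposing a strictly larger label at the root, hence infinitely many zero-word prefixes. Your approach instead works entirely with the partial-sum graph: you prove the auxiliary equivalence ``reduces to a root redex iff some partial sum vanishes'' and the one-step invariance of ``infinitely many zeros''. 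The labelling argument is shorter; yours makes the graph combinatorics explicit and, in the contrapositive direction, produces an explicit reduct witnessing non-root-activity.

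Your sketched extension to \emph{infinitary} reducts, however, does not go through. The paper's own term $r = \ssuc\,\spre\,\ssuc^2\,\spre^2\,\ssuc^3\,\spre^3\cdots$ has $\funap{\mrm{sum}}{r,n}=0$ for infinitely many $n$, yet by the preceding proposition $r \to^{\omega} \ssuc^\omega$ strongly convergently, and $\funap{\mrm{sum}}{\ssuc^\omega,n}=n$ never vanishes for $n\ge 1$. So ``infinitely many zeros'' is \emph{not} preserved in the limit: each step may kill one zero, and an $\omega$-sequence may kill them all, freezing of initial blocks notwithstanding. This does not harm the proposition, because root-activity here is meant in the finite sense --- under the infinitary reading $r$ would \emph{not} be root-active (its reduct $\ssuc^\omega$ is a normal form), contradicting the paper's explicit placement of $r$ in $\mrm{RA}$. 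Your caveat about the definition is well taken; just withdraw the claim that the infinitary case can be salvaged by convergence bookkeeping.
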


\begin{proof}\label{prop:root-active:SPterms}
  The direction $\Leftarrow$ is obvious.

  For the direction $\Rightarrow$ 
  we label all $\ssuc$'s and $\spre$'s in the start term, say by numbering them
  from left to right, so e.g.\ the labelled $t$ could be:
  \[  
    \ssuc_0 \, \ssuc_1 \, \spre_2 \, \ssuc_3 \, \spre_4 \, \spre_5 \, \ldots 
  \]
  Then in a reduction of $t$ any $\ssuc$ or $\spre$ can be traced back to a unique ancestor in~$t$.
  Furthermore let $w_i$ be the prefix of $t$ of length $i$.
  Observation: if $\ssuc_i$ or $\spre_i$ gets at a root position in a reduction of $t$, 
  then $w_i$ is a $0$-word.
  Proof: easy.
\end{proof}

\begin{corollary}
  $\SNinf$ is the set of \spterm{s} 
  that are infinitarily strongly normalizing. 
  Then $t\in\SNinf$ if and only if each value $\funap{\mrm{sum}}{t,n}$
  for $n=0,1\dots$ occurs only finitely often, or equivalently
  $\lim_{n \to \infty} \funap{\mrm{sum}}{t,n}$ exists (then it is $\infty$ or $-\infty$).
\end{corollary}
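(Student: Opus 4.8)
The plan is to establish a three-way equivalence: (a) $t\in\SNinf$, i.e.\ every (transfinite) reduction from $t$ is strongly convergent; (b) each value $\funap{\mrm{sum}}{t,n}$, $n\in\nat$, is attained for only finitely many $n$; (c) $\lim_{n\to\infty}\funap{\mrm{sum}}{t,n}$ exists. I would first handle (b)$\Leftrightarrow$(c) by elementary reasoning about the $\pm1$-walk $n\mapsto\funap{\mrm{sum}}{t,n}$: consecutive values differ by exactly $1$, so the walk is never eventually constant, a finite limit is impossible, and hence the limit, if it exists, must be $\infinity$ or $-\infinity$, which at once yields (b); conversely, if the limit does not exist then $\liminf<\limsup$, and for an integer $j$ with $\liminf\le j<\limsup$ (take $j=\liminf$ if that is finite) the walk has infinitely many indices of value $\le j$ and infinitely many of value $\ge j+1$, so, stepping by $\pm1$, it passes through $j$ infinitely often, contradicting (b).

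Next I would show that the failure of (b) forces $t\notin\SNinf$. Suppose some value $k$ is attained for infinitely many $n$, and let $n_0$ be least with $\funap{\mrm{sum}}{t,n_0}=k$. Writing $t=u\cdot t''$ with $u$ the length-$n_0$ prefix, we have $\funap{\mrm{sum}}{t'',m}=\funap{\mrm{sum}}{t,n_0+m}-k$, which equals $0$ for infinitely many $m$, so by the preceding Proposition characterising root-active \spterm{s} the term $t''$ is root-active. Repeatedly reducing $t''$ to a term with a root redex and contracting that redex produces a reduction of $t''$ with infinitely many root steps; applied to the $t''$-part of $t=u\cdot t''$ this becomes a reduction of $t$ with infinitely many steps at the fixed depth $n_0$, which is not strongly convergent. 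Hence $t\notin\SNinf$.

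The heart of the proof is (b)$\Rightarrow$(a). By the symmetry $\ssuc\leftrightarrow\spre$ (which swaps the two rules and negates the walk) together with (b)$\Leftrightarrow$(c), I may assume $\lim_n\funap{\mrm{sum}}{t,n}=\infinity$; then the walk of $t$ is bounded below, so $\funap{N_c}{t}:=|\{\,n\in\nat\mid\funap{\mrm{sum}}{t,n}\le c\,\}|$ is finite for every $c\in\integers$. I would track $\funap{N_c}{s}$ along reductions $s$ of $t$. A step at depth $d$ contracts a cancelling pair $\ssuc\spre$ or $\spre\ssuc$ sitting at depths $d,d+1$, which removes exactly the two profile values $\funap{\mrm{sum}}{s,d+1}$ and $\funap{\mrm{sum}}{s,d+2}$ and leaves all remaining profile values unchanged; hence $N_c$ never increases along a reduction. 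Moreover $\funap{\mrm{sum}}{s,d+2}=\funap{\mrm{sum}}{s,d}$ and $\funap{\mrm{sum}}{s,d+1}=\funap{\mrm{sum}}{s,d}\pm1$, while $|\funap{\mrm{sum}}{s,d}|\le d$; so if $d\le D$ both removed values are $\le D+1$ and the step decreases $N_{D+1}$ by $2$. Therefore any reduction from $t$ contains at most $\funap{N_{D+1}}{t}/2$ steps at depth $\le D$, for every $D$, so the depths of its steps tend to infinity: every reduction from $t$ is strongly convergent and $t\in\SNinf$. Combining with the earlier Proposition relating $\ssuc$-heights to reducibility — the upper $\ssuc$-height is $\infinity$ and the lower one finite — also shows that every reduction from $t$ ends in the unique reachable normal form $\ssuc^\omega$, matching the final parenthetical clause of the statement.

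The step I expect to be the main obstacle is the bookkeeping in (b)$\Rightarrow$(a), specifically checking that $N_c$ remains non-increasing across a limit ordinal of a transfinite reduction: there one invokes that such a reduction is Cauchy, so any finite collection of low profile values of the limit term already appears in all sufficiently late terms, whence $N_c$ of the limit term does not exceed the eventual (constant, natural-number) value that $N_c$ takes on the earlier terms of the reduction; granting this, the bound ``at most $\funap{N_{D+1}}{t}/2$ steps at depth $\le D$'' holds for reductions of arbitrary ordinal length, which is exactly what is required. The remaining ingredients are the elementary $\pm1$-walk combinatorics above and a direct appeal to the root-active characterisation proved just before the corollary.
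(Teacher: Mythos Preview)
Your argument is correct. The paper's own ``proof'' is a single sentence --- ``Note that then the normal form is unique, since there is no oscillation'' --- which is more a remark on a consequence than a demonstration of the stated equivalence, so there is little to compare against at the level of technique. Your $\lnot(\text{b})\Rightarrow\lnot(\text{a})$ via the root-active Proposition is presumably the intended reason this is labelled a corollary: if a value $k$ recurs infinitely often in the profile, the suffix after its first occurrence has profile hitting $0$ infinitely often, hence is root-active by the preceding Proposition, and one manufactures a reduction of $t$ with infinitely many steps at a fixed depth. For $(\text{b})\Rightarrow(\text{a})$ the paper offers nothing beyond the phrase ``no oscillation''; your $N_c$ bookkeeping --- each step deletes exactly the two profile values $\mathrm{sum}(s,d{+}1)$ and $\mathrm{sum}(s,d{+}2)$, both $\le d{+}1\le D{+}1$ when $d\le D$, so $N_{D+1}$ drops by~$2$ --- gives a clean quantitative bound of $\tfrac12 N_{D+1}(t)$ on the number of steps at depth $\le D$ and hence forces strong convergence; your lower-semicontinuity argument for $N_c$ at limit ordinals correctly closes the transfinite case. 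One cosmetic point in $(\text{b})\Leftrightarrow(\text{c})$: you name $j=\liminf$ only when that is finite, but should also say what to do when $\liminf=-\infty$ (take $j=\limsup-1$ if $\limsup$ is finite, and any integer $j$ if both are infinite).
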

\begin{proof}
  Note that then the normal form is unique,
  since there is no oscillation.
\end{proof}
}

\section{Infinitary Confluence}\label{sec:confluence}
In the previous section we have seen that the property $\UNinf$ 
fails dramatically for weakly orthogonal TRSs when collapsing rules are present,
and hence also $\CRinf$. 
Now we show that \woTRS{s} without collapsing 
rules are infinitary confluent ($\CRinf$),
and as a consequence also have the property $\UNinf$.

We adapt the projection of parallel steps in \woTRS{s}
from~\cite[Section~8.8.4.]{terese:2003} to infinite terms.
The basic idea is to orthogonalize the parallel steps,
and then project the orthogonalized steps.
The orthogonalization uses that overlapping redexes 
have the same effect and hence can be replaced by each other.
In case of overlaps we replace the outermost redex by the innermost one.
%\todo{we do not understand this! is there a contradiction between these sentences?}
This is possible for infinitary parallel steps since there
can never be infinite chains of overlapping, nested redexes (see Figure~\ref{fig:chain}).
For a treatment of infinitary developments where such chains can occur,
we refer to Section~\ref{sec:diamond}.
See further~\cite[Proposition~8.8.23]{terese:2003} for orthogonalization 
in the finitary case.

\begin{proposition}
  Let $\pstep{\astep}{s}{t_1}$, $\pstep{\bstep}{s}{t_2}$ be parallel steps in a \woTRS{}.
  Then there exists an \emph{orthogonalization $\pair{\astep'}{\bstep'}$ of $\astep$ and $\bstep$},
  that is, a pair of orthogonal parallel steps such that $\pstep{\astep'}{s}{t_1}$, $\pstep{\bstep'}{s}{t_2}$.
\end{proposition}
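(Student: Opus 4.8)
The plan is to adapt the finitary orthogonalization of \cite[Proposition~8.8.23]{terese:2003}: in a critical overlap, replace the outermost redex by an innermost one, and check that infinite terms and infinitely many contracted redexes cause no harm. Represent $\astep$ by the set of redexes of $s$ it contracts, at pairwise disjoint positions and with chosen rules, and likewise $\bstep$. The first ingredient is a local lemma: if a redex $u$ (rule $\ell_1 \to r_1$, matcher $\asubst$) contains at a \emph{non-variable} position $p \neq \posemp$ of $\ell_1$ a redex $v$ (rule $\ell_2 \to r_2$, matcher $\bsubst$), then by left-linearity $v$ arises from unifying $\ell_2$ with the subterm of $\ell_1$ at $p$, so $\langle r_1\asubst,\ \ell_1\asubst[\,r_2\bsubst\,]_p\rangle$ is an instance of the corresponding critical pair; weak orthogonality makes that critical pair trivial, hence $r_1\asubst \equiv \ell_1\asubst[\,r_2\bsubst\,]_p$. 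Since $\ell_1,\ell_2$ are finite terms, the overlap, the position $p$, and the unifier are all finite, even though $r_1,r_2$ and the contracta may be infinite; ``trivial'' is just literal identity of these possibly infinite terms. An analogous statement covers a root overlap ($p = \posemp$, with different rules).

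Call a redex of $\astep \cup \bstep$ \emph{superfluous} if it properly contains, at a non-variable position of its left-hand side, a redex of the other step. Because the redexes of $\astep$ are pairwise disjoint and likewise those of $\bstep$, any chain $w_0 \supsetneq w_1 \supsetneq \cdots$ of nested, pairwise overlapping redexes drawn from $\astep \cup \bstep$ has length at most $2$: consecutive members lie in different steps, so members two apart lie in the same step while being nested --- impossible. In particular there is no infinite such chain (contrast the infinitary developments of Section~\ref{sec:diamond}), a superfluous redex is always outermost among the redexes of $\astep\cup\bstep$ comparable to it, and a redex of $\astep$ sitting inside a pattern position of a redex of $\bstep$ is \emph{not} superfluous (and symmetrically).

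Now form $\astep'$ from $\astep$ by replacing every superfluous $u \in \astep$ with some redex $v$ of $\bstep$ that it contains at a pattern position (any one, if there are several), and by replacing the $\astep$-redex with the $\bstep$-redex whenever both steps contract at the same position with different rules; build $\bstep'$ from $\bstep$ symmetrically. Since each replacing $v$ sits at or below $u$'s position and the remaining redexes of $\astep$ are disjoint from $u$'s position, $\astep'$ is again a set of pairwise disjoint positions, so $\pstep{\astep'}{s}{t_1'}$ is a well-defined parallel step, and the local lemma applied region by region gives $t_1' \equiv t_1$; likewise $\pstep{\bstep'}{s}{t_2}$. Distinct superfluous redexes get distinct replacements (a common one would lie inside two disjoint redexes of the same step), and every replacing redex is non-superfluous in its own step, hence is kept verbatim in that step's orthogonalization --- so $\astep'$ and $\bstep'$ end up \emph{sharing} each replacing redex rather than overlapping there. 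A short case distinction on whether $a \in \astep'$ and $b \in \bstep'$ are original or replacing redexes then shows that $a,b$ can never form a critical overlap: in every case one obtains either two nested redexes inside a single one of $\astep,\bstep$, contradicting disjointness, or a superfluous redex that should already have been replaced; root overlaps are excluded by the extra replacement step. Hence $\pair{\astep'}{\bstep'}$ is an orthogonalization of $\astep$ and $\bstep$.

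The routine local lemma aside, the real work is the global bookkeeping of the third step --- that the (possibly infinitely many) replacements can be performed simultaneously without interfering and leave an overlap-free pair. Pairwise disjointness of the redexes within a \emph{single} parallel step is exactly what makes this go through: it localizes every overlap to one $\astep$-redex and one $\bstep$-redex, bounds overlap chains to length two, and gives the replacements pairwise disjoint domains. The infinitary features --- infinite right-hand sides, infinitely many contracted redexes --- are then inert, the only point to keep in mind being that triviality of a critical pair must be read as literal equality of the (possibly infinite) contracta.
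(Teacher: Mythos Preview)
Your approach is essentially the paper's: replace each outermost (``superfluous'') redex by an overlapping innermost one from the other step, and break ties at equal positions by a fixed choice of rule. You supply considerably more verification than the paper does (the local critical-pair lemma, the length-$2$ bound on overlap chains, and the case analysis showing no new overlaps arise), all of which is correct. One small slip: applying the same-position rule ``symmetrically'' would put the $\bstep$-redex into $\astep'$ and the $\astep$-redex into $\bstep'$, recreating a root overlap; this tie-break must be asymmetric (as the paper does, and as your later remark ``root overlaps are excluded by the extra replacement step'' presupposes).
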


\begin{proof}
In case of overlaps between $\astep$ and $\bstep$, 
then for every overlap we replace the outermost redex by the innermost one
(if there are multiple inner redexes overlapping, then we choose the left-most among the top-most redexes).
If there are two redexes at the same position but with respect to different rules,
then we replace the redex in $\bstep$ with the one in $\astep$.
See also Figure~\ref{fig:orthogonalization:parallel}.
\begin{figure}[hpt!]
\begin{center}
  \includegraphics{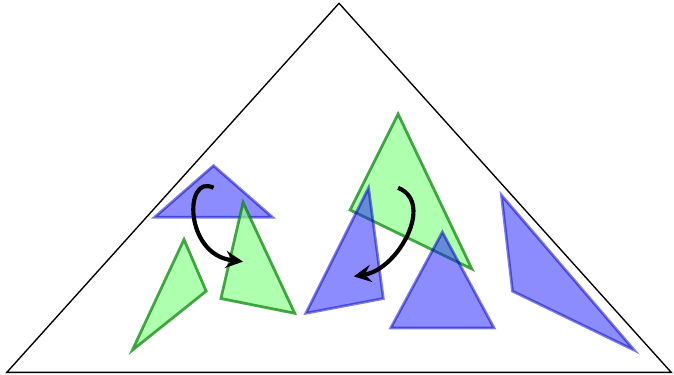}
\end{center}\vspace{-3ex}
\caption{\textit{Orthogonalization of parallel steps; the arrow indicates replacement.}}
\label{fig:orthogonalization:parallel}
\end{figure}
\end{proof}

\begin{definition}
Let $\pstep{\astep}{s}{t_1}$, $\pstep{\bstep}{s}{t_2}$ be parallel steps in a \woTRS{}.
The \emph{weakly orthogonal projection $\project{\astep}{\bstep}$ of $\astep$ over $\bstep$}
is defined as the orthogonal projection $\project{\astep'}{\bstep'}$
where $\pair{\astep'}{\bstep'}$ is the orthogonalization of $\astep$ and $\bstep$.
\end{definition}

\begin{remark}
The weakly orthogonal projection does not give rise to a residual system
in the sense of~\cite{terese:2003}.
The projection fulfils the three identities
  $\project{\astep}{\astep} \approx \unit$,
  $\project{\astep}{\unit} \approx \astep$, and
  $\project{\unit}{\astep} \approx \unit$,
but not the \emph{cube identity}
  $\project{(\project{\astep}{\bstep})}{(\project{\cstep}{\bstep})} \approx
    \project{(\project{\astep}{\cstep})}{(\project{\bstep}{\cstep})}$,
depicted in Figure~\ref{fig:cube}.

\begin{figure}[hpt!]
\begin{center}
\begin{tikzpicture}
  [inner sep=1mm,
   node distance=30mm,
   terminal/.style={
     rectangle,rounded corners=2.25mm,minimum size=4mm,
     very thick,draw=black!50,top color=white,bottom color=black!20,
   },
   >=stealth,
   red/.style={very thick,->},
   behind/.style={gray,thick},
   diag/.style={inner sep=.5mm}
  ]

  \node (s111) {};
  \node (s211) [right of=s111] {};
  \node (s121) [above of=s111] {};
  \node (s221) [above of=s211] {};
  \begin{scope}[node distance=22mm]
    \node (s112) [above right of=s111] {};
    \node (s212) [above right of=s211] {};
    \node (s122) [above right of=s121] {};
    \node (s222) [above right of=s221] {};
  \end{scope}

  \draw [red,behind] (s111) -- (s112); \node at ($(s111)!.5!(s112)$) [anchor=south east,diag] {$\astep$};
  \draw [red,behind,dotted] (s112) -- (s212); % \node at ($(s112)!.25!(s212)$) [anchor=south,behind] {$\project{\cstep}{\astep}$};
  \draw [red,behind,dotted] (s112) -- (s122); % \node at ($(s112)!.3!(s122)$) [anchor=east,behind] {$\project{\bstep}{\astep}$};

  \draw [red] (s111) -- (s211); \node at ($(s111)!.5!(s211)$) [anchor=north] {$\cstep$};
  \draw [red] (s111) -- (s121); \node at ($(s111)!.5!(s121)$) [anchor=east] {$\bstep$};

  \draw [red] (s211) -- (s212); \node at ($(s211)!.5!(s212)$) [anchor=north west,diag] {$\project{\astep}{\cstep}$};
  \draw [red] (s211) -- (s221); \node at ($(s211)!.4!(s221)$) [anchor=west] {$\project{\bstep}{\cstep}$};
  \draw [red] (s121) -- (s221); \node at ($(s121)!.35!(s221)$) [anchor=south] {$\project{\cstep}{\bstep}$};
  \draw [red] (s121) -- (s122); \node at ($(s121)!.5!(s122)$) [anchor=south east,diag] {$\project{\astep}{\bstep}$};

  \draw [red,dashed] (s221) -- (s222)
    node [midway,sloped,above] {$\project{(\project{\astep}{\bstep})}{(\project{\cstep}{\bstep})}$}
    node [midway,sloped,below] {$\project{(\project{\astep}{\cstep})}{(\project{\bstep}{\cstep})}$};

  \draw [red,dotted,thin,shorten >= 6mm] (s122) -- (s222);
  \draw [red,dotted,thin,shorten >= 6mm] (s212) -- (s222);
\end{tikzpicture}
\end{center}\vspace{-3ex}
\caption{\textit{Cube identity 
   $\project{(\project{\astep}{\bstep})}{(\project{\cstep}{\bstep})} \approx
    \project{(\project{\astep}{\cstep})}{(\project{\bstep}{\cstep})}$.}}
\label{fig:cube}
\end{figure}
\end{remark}

\begin{lemma}\label{lem:proj:lift}
  Let $\pstep{\astep}{s}{t_1}$, $\pstep{\bstep}{s}{t_2}$ be parallel steps in a \woTRS{} $\atrs$.
  Let $d_\astep$ and $d_\bstep$ be the minimal depth of a step in $\astep$ and $\bstep$, respectively.
  Then the minimal depth of the weakly orthogonal projections
  $\project{\astep}{\bstep}$ and $\project{\bstep}{\astep}$
  is greater or equal $\bfunap{\min}{d_\astep}{d_\bstep}$.
  If $\atrs$ contains no collapsing rules
  then the minimal depth of $\project{\astep}{\bstep}$ and $\project{\bstep}{\astep}$ is greater or equal $\bfunap{\min}{d_\astep}{d_\bstep+1}$
  and $\bfunap{\min}{d_\bstep}{d_\astep+1}$, respectively.
\end{lemma}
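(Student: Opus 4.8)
The plan is to reduce the assertion about the weakly orthogonal projection to the corresponding assertion about an \emph{orthogonal} projection, and then to carry out a depth analysis of residuals.

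\textbf{Step 1 (reduction to the orthogonal case).} By definition $\project{\astep}{\bstep} = \project{\astep'}{\bstep'}$ and $\project{\bstep}{\astep} = \project{\bstep'}{\astep'}$ for a suitable orthogonalization $\pair{\astep'}{\bstep'}$ furnished by the preceding Proposition. The first thing I would check is that orthogonalization never decreases the minimal depth of either parallel step: each replacement it performs either swaps the outermost redex of an overlap for an innermost one — which sits strictly deeper — or swaps two redexes sitting at one and the same position; hence $d_{\astep'} \ge d_\astep$ and $d_{\bstep'} \ge d_\bstep$. Moreover $\astep'$ and $\bstep'$ use only rules of $\atrs$, so if $\atrs$ is collapse-free then no collapsing rule occurs in $\astep'$ or $\bstep'$. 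Since $\min$ is monotone in both arguments, it therefore suffices to prove, for \emph{orthogonal} parallel steps $\astep', \bstep'$, that the minimal depth of $\project{\astep'}{\bstep'}$ is $\ge \min(d_{\astep'}, d_{\bstep'})$ in general, and $\ge \min(d_{\astep'}, d_{\bstep'}+1)$ when $\atrs$ has no collapsing rule.

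\textbf{Step 2 (residual analysis).} For orthogonal parallel steps, $\project{\astep'}{\bstep'}$ contracts exactly the residuals of the $\astep'$-redexes after $\bstep'$. Fix an $\astep'$-redex at position $p$, so $|p| \ge d_{\astep'}$, and locate it relative to the $\bstep'$-redexes. If no $\bstep'$-redex lies at a position $\le p$, then by left-linearity the pattern at $p$ is untouched and the residual is again rooted at $p$, at depth $\ge d_{\astep'}$. Otherwise, by pairwise disjointness of the $\bstep'$-redexes, exactly one $\bstep'$-redex lies at some $q \le p$. If $q = p$ then both redexes use the same rule (same-position overlaps with different rules were removed by orthogonalization) and the residual is empty. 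If $q < p$, orthogonality forbids $p$ from pointing into the pattern of that $\bstep'$-redex, so $p = q\,q''\,q'''$ where $q\,q''$ reaches a variable position of the applied rule $\ell\to r$ and $q'''$ points into the matched argument; after contracting that $\bstep'$-redex, the residuals of the $\astep'$-redex are rooted exactly at the positions $q\,v\,q'''$ with $v$ ranging over the occurrences of that variable in $r$ (there may be several, since $r$ need not be right-linear, and $r$ may be infinite). In general $|q\,v\,q'''| \ge |q| \ge d_{\bstep'}$, so together with the first case every residual is at depth $\ge \min(d_{\astep'}, d_{\bstep'})$, proving the first claim. If $\atrs$ is collapse-free then $r$ is not a variable, so every variable occurrence $v$ in $r$ has $|v| \ge 1$, whence $|q\,v\,q'''| \ge |q| + 1 \ge d_{\bstep'} + 1$; together with the first case every residual is at depth $\ge \min(d_{\astep'}, d_{\bstep'}+1)$. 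Unfolding $d_{\astep'} \ge d_\astep$ and $d_{\bstep'} \ge d_\bstep$ gives the stated bounds for $\project{\astep}{\bstep}$, and the bounds for $\project{\bstep}{\astep}$ follow by interchanging the roles of $\astep$ and $\bstep$.

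\textbf{Expected main obstacle.} The mathematical content is light: collapse-freeness enters in exactly one spot, to turn $|v| \ge 0$ into $|v| \ge 1$, which is precisely why the stronger bound must fail once collapsing rules are admitted. The real work is in the bookkeeping — checking rigorously that orthogonalization cannot lower minimal depths even though one replacement can in principle create or destroy further overlaps, and phrasing the residual analysis so that it genuinely covers infinitary parallel steps (infinitely many contracted redexes whose depths tend to infinity, and possibly infinite right-hand sides $r$) as well as right-hand sides in which the relevant variable is repeated or absent. Those are the points I would be most careful about.
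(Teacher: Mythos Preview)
Your proposal is correct and follows essentially the same approach as the paper, which gives only a two-line hint (``Immediate from the definition of the orthogonalization (for overlaps the innermost redex is chosen) and the fact that in the orthogonal projection a non-collapsing rule applied at depth $d$ can lift nested redexes at most to depth $d{+}1$''). Your Step~1 unpacks the first clause and your Step~2 is a careful position-level elaboration of the second, with collapse-freeness entering exactly where the paper says it does.
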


\begin{proof}
  Immediate from the definition of the orthogonalization (for overlaps the innermost redex is chosen)
  and the fact that in the orthogonal projection
  a non-collapsing rule applied at depth $d$
  can lift nested redexes at most to depth $d+1$ (but not above).
\end{proof}

\begin{lemma}[Strip/Lift Lemma]\label{lem:strip}
Let $\atrs$ be a \woTRS{}, 
$\aseq \funin s \to^\alpha t_1$ a rewrite sequence,
and $\pstep{\astep}{s}{t_2}$ a parallel rewrite step.
Let $d_\aseq$ and $d_\bseq$ be the minimal depth of a step in $\aseq$
and $\astep$, respectively.
Then there exist a term $u$, a rewrite sequence $\bseq \funin t_2 \to^{\le \omega} u$
and a parallel step $\pstep{\bstep}{t_1}{u}$
such that the minimal depth of the rewrite steps in $\bseq$ and $\bstep$ is $\bfunap{\min}{d_\aseq}{d_\bseq}$;
see Figure~\ref{fig:strip:collapse}.

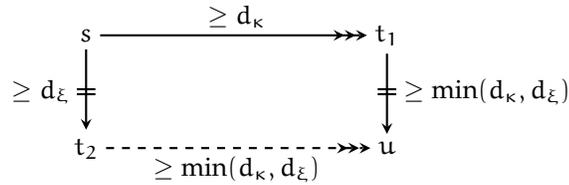
\begin{figure}[hpt!]
\begin{center}\hspace{1cm}
\begin{tikzpicture}
  [inner sep=1mm,
   node distance=15mm,
   terminal/.style={
     rectangle,rounded corners=2.25mm,minimum size=4mm,
     very thick,draw=black!50,top color=white,bottom color=black!20,
   },
   >=stealth,
   red/.style={thick,->>},
   infred/.style={
     thick,
     shorten >= 1mm,
     decoration={
       markings,
       mark=at position -3mm with {\arrow{stealth}},
       mark=at position -1.5mm with {\arrow{stealth}},
       mark=at position 1 with {\arrow{stealth}}
     },
     postaction={decorate}
   }]

  \node (s) {$s$};
  \node (t1) [node distance=40mm,right of=s] {$t_1$};
  \node (t2) [below of=s] {$t_2$};
  \node (u) [below of=t1] {$u$};

  \draw [infred] (s) -- (t1);
  \draw [red,->] (s) -- (t2);
  \draw [infred,dashed] (t2) -- (u);
  \draw [red,->] (t1) -- (u);

  \draw [thick] ($(s)!.5!(t2) + (-.75ex,-.20ex)$) -- ($(s)!.5!(t2) + (.75ex,-.20ex)$);
  \draw [thick] ($(s)!.5!(t2) + (-.75ex,.20ex)$) -- ($(s)!.5!(t2) + (.75ex,.20ex)$);
  \draw [thick] ($(t1)!.5!(u) + (-.75ex,-.20ex)$) -- ($(t1)!.5!(u) + (.75ex,-.20ex)$);
  \draw [thick] ($(t1)!.5!(u) + (-.75ex,.20ex)$) -- ($(t1)!.5!(u) + (.75ex,.20ex)$);

  \node at ($(s)!.5!(t1)$) [anchor=south] {$\ge d_\aseq$};
  \node at ($(s)!.5!(t2)$) [anchor=east,xshift=-.7ex] {$\ge d_\bseq$};
  \node at ($(t2)!.5!(u)$) [anchor=north] {$\ge \bfunap{\min}{d_\aseq}{d_\bseq}$};
  \node at ($(t1)!.5!(u)$) [anchor=west,xshift=.7ex] {$\ge \bfunap{\min}{d_\aseq}{d_\bseq}$};
\end{tikzpicture}
\end{center}\vspace{-3ex}
\caption{\textit{Strip/Lift Lemma with collapsing rules.}}
\label{fig:strip:collapse}
\end{figure}

If additionally $\atrs$ contains no collapsing rules, then
the minimal depth of a step in $\bseq$ and $\bstep$
is $\bfunap{\min}{d_\aseq}{d_\bseq+1}$ and $\bfunap{\min}{d_\bseq}{d_\aseq+1}$, respectively. See also Figure~\ref{fig:strip}.

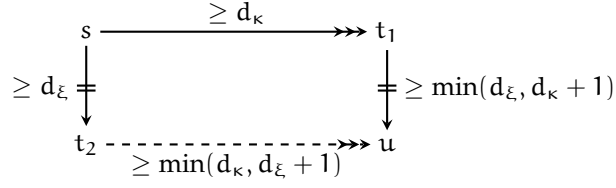
\begin{figure}[hpt!]
\begin{center}\hspace{2cm}
\begin{tikzpicture}
  [inner sep=1mm,
   node distance=15mm,
   terminal/.style={
     rectangle,rounded corners=2.25mm,minimum size=4mm,
     very thick,draw=black!50,top color=white,bottom color=black!20,
   },
   >=stealth,
   red/.style={thick,->>},
   infred/.style={
     thick,
     shorten >= 1mm,
     decoration={
       markings,
       mark=at position -3mm with {\arrow{stealth}},
       mark=at position -1.5mm with {\arrow{stealth}},
       mark=at position 1 with {\arrow{stealth}}
     },
     postaction={decorate}
   }]

  \node (s) {$s$};
  \node (t1) [node distance=40mm,right of=s] {$t_1$};
  \node (t2) [below of=s] {$t_2$};
  \node (u) [below of=t1] {$u$};

  \draw [infred] (s) -- (t1);
  \draw [red,->] (s) -- (t2);
  \draw [infred,dashed] (t2) -- (u);
  \draw [red,->] (t1) -- (u);

  \draw [thick] ($(s)!.5!(t2) + (-.75ex,-.20ex)$) -- ($(s)!.5!(t2) + (.75ex,-.20ex)$);
  \draw [thick] ($(s)!.5!(t2) + (-.75ex,.20ex)$) -- ($(s)!.5!(t2) + (.75ex,.20ex)$);
  \draw [thick] ($(t1)!.5!(u) + (-.75ex,-.20ex)$) -- ($(t1)!.5!(u) + (.75ex,-.20ex)$);
  \draw [thick] ($(t1)!.5!(u) + (-.75ex,.20ex)$) -- ($(t1)!.5!(u) + (.75ex,.20ex)$);

  \node at ($(s)!.5!(t1)$) [anchor=south] {$\ge d_\aseq$};
  \node at ($(s)!.5!(t2)$) [anchor=east,xshift=-.7ex] {$\ge d_\bseq$};
  \node at ($(t2)!.5!(u)$) [anchor=north] {$\ge \bfunap{\min}{d_\aseq}{d_\bseq+1}$};
  \node at ($(t1)!.5!(u)$) [anchor=west,xshift=.7ex] {$\ge \bfunap{\min}{d_\bseq}{d_\aseq+1}$};
\end{tikzpicture}
\end{center}\vspace{-3ex}
\caption{\textit{Strip/Lift Lemma without collapsing rules.}}
\label{fig:strip}
\end{figure}
\end{lemma}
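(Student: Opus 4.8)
The plan is to reduce to the case $\alpha\le\omega$, and then to project $\astep$ through $\aseq$ one step at a time, building a grid of weakly orthogonal projections whose rungs are controlled by Lemma~\ref{lem:proj:lift} and whose ``bottom edge'' is kept within length~$\omega$ by the Refined Compression Lemma. If $\alpha=0$, take $u=t_2$, let $\bseq$ be the empty rewrite sequence, and set $\bstep=\astep$. Otherwise I would first apply Lemma~\ref{lem:compression} to $\aseq$, obtaining a rewrite sequence $s\to^{\le\omega}t_1$ all of whose steps have depth $\ge d_\aseq$; since this sequence has the same source and target and at least as large a minimal depth, I may assume from now on that $\aseq\funin s\equiv s_0\to s_1\to s_2\to\cdots$ has length $\le\omega$, its $i$-th step being $\cstep_i\funin s_{i-1}\to s_i$ at depth $\ge d_\aseq$.

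Next I would build the \emph{projection grid}. Set $u_0:=t_2$ and $\astep^{(0)}:=\astep$, so that $\pstep{\astep^{(0)}}{s_0}{u_0}$. Given a parallel step $\pstep{\astep^{(i-1)}}{s_{i-1}}{u_{i-1}}$, I regard $\cstep_i$ as a one-redex parallel step co-initial with $\astep^{(i-1)}$ and take their weakly orthogonal projections in both directions, which meet in a common term $u_i$: let $\astep^{(i)}:=\project{\astep^{(i-1)}}{\cstep_i}$, a parallel step from $s_i$ to $u_i$, and let $\bseq_i\funin u_{i-1}\to^{\le\omega}u_i$ be a complete development of $\project{\cstep_i}{\astep^{(i-1)}}$; the latter is a set of pairwise disjoint redexes (as $\cstep_i$ is a single redex and $\atrs$ is left-linear), hence realizable by a strongly convergent sequence of length $\le\omega$. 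A routine induction on $i$, using Lemma~\ref{lem:proj:lift} and monotonicity of $\min$, then shows that every step of $\bseq_i$ and of $\astep^{(i)}$ has depth $\ge\bfunap{\min}{d_\aseq}{d_\bseq}$: indeed $\astep^{(0)}=\astep$ has minimal depth $d_\bseq\ge\bfunap{\min}{d_\aseq}{d_\bseq}$, and if $\astep^{(i-1)}$ has minimal depth $\ge\bfunap{\min}{d_\aseq}{d_\bseq}$ then, since $\cstep_i$ has depth $\ge d_\aseq\ge\bfunap{\min}{d_\aseq}{d_\bseq}$, Lemma~\ref{lem:proj:lift} gives the same lower bound for $\astep^{(i)}$ and for $\project{\cstep_i}{\astep^{(i-1)}}$, hence for each step of $\bseq_i$. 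For the collapse-free refinement, the same induction with the sharper clause of Lemma~\ref{lem:proj:lift} maintains the invariant that $\astep^{(i)}$ has minimal depth $\ge\bfunap{\min}{d_\bseq}{d_\aseq+1}$ and yields that every step of $\bseq_i$ has depth $\ge\bfunap{\min}{d_\aseq}{d_\bseq+1}$; a small case distinction on which argument attains the outer $\min$ shows that these two invariants remain consistent along the grid.

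It then remains to pass to the limit. Because $\aseq$ is strongly convergent, the depths of the $\cstep_i$ tend to infinity; moreover a residual of the single redex $\cstep_i$ under $\astep^{(i-1)}$ can be lifted by at most the pattern depth of the rule of the redex of $\astep^{(i-1)}$ lying above it, a quantity bounded uniformly in $i$, so the redexes developed by $\bseq_i$ also sit at depths tending to infinity. Together with the strong convergence of each individual $\bseq_i$, this makes the concatenation $\bseq_1\cdot\bseq_2\cdots$, of length $\le\omega^2$ (respectively $\le\omega\cdot n$ when $\aseq$ is finite of length $n$), strongly convergent, with a well-defined limit $u$. By the same depth estimate the redex sets of the $\astep^{(i)}$ stabilize below every fixed depth, so they converge to a parallel step $\pstep{\bstep}{t_1}{u}$ (when $\aseq$ is finite one takes $\bstep=\astep^{(n)}$ directly). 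Finally, a further application of Lemma~\ref{lem:compression} compresses the strongly convergent sequence $t_2\to^{\le\omega^2}u$ to a rewrite sequence $\bseq\funin t_2\to^{\le\omega}u$ whose steps still have depth $\ge\bfunap{\min}{d_\aseq}{d_\bseq}$ (respectively $\ge\bfunap{\min}{d_\aseq}{d_\bseq+1}$). Then $u$, $\bseq$ and $\bstep$ are as required.

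The step I expect to be the main obstacle is this passage to the limit: proving simultaneously that the concatenated developments form a strongly convergent sequence and that the rungs $\astep^{(i)}$ converge to a genuine parallel step on $t_1$. Both facts rest on complementing the \emph{lower} bound on projection depths given by Lemma~\ref{lem:proj:lift} with an \emph{upper} bound on how far a weakly orthogonal projection can lift a redex (governed by the pattern depths of the rules), so that only finitely many of the developed redexes lie below any prescribed depth; this is precisely where strong convergence of $\aseq$ enters and where care is needed. A subsidiary point to make precise is that a one-redex step projected over a parallel step is a complete development of pairwise disjoint redexes, and that by left-linearity and finiteness of arities such a development is realizable as a strongly convergent rewrite sequence of length $\le\omega$.
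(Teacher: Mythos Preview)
Your approach is essentially the paper's: compress $\aseq$ to length $\le\omega$ via Lemma~\ref{lem:compression}, build the projection grid one elementary diagram at a time using the weakly orthogonal projection and Lemma~\ref{lem:proj:lift} for the depth invariants, and then pass to the limit. The only substantive difference is in the limit argument itself: you appeal to a lifting amount ``bounded uniformly in $i$'', but this uniform bound is not obviously available when the iTRS has rules of unbounded pattern depth (recall that left-hand sides are finite but the rule set need not be). The paper avoids this by arguing \emph{locally}: for each depth $d$ it first picks $n_0$ so that all later $\cstep_i$ lie at depth $\ge d$, observes that the finitely many redexes of $\astep^{(n_0)}$ above depth $d$ can only shrink thereafter, takes $p$ to be the maximal pattern depth among \emph{those} finitely many redexes, and then chooses $m_0\ge n_0$ with all later $\cstep_i$ at depth $\ge d+p$, after which the shallow part of $\astep^{(m)}$ is literally frozen. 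This yields strong convergence of both the bottom sequence and the rungs $\astep^{(i)}$ without any global bound on pattern depths, and is exactly the point you flag as the main obstacle. Your explicit final compression of $t_2\to^{\le\omega^2}u$ to length $\le\omega$ is a detail the paper leaves implicit.
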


\begin{proof}
  By compression we may assume $\alpha \le \omega$ in $\aseq \funin s \to^{\le \omega} t_1$ 
  (note that, the minimal depth $d$ is preserved by compression).
  Let $\aseq \funin s \equiv s_0 \to s_1 \to s_2 \to \ldots$,
  and define $\bstep_0 = \bstep$.
  Furthermore, let $\seqpref{\aseq}{n}$ denote the prefix of $\aseq$ of length $n$, that is, $s_0 \to \ldots \to s_n$
  and let $\seqsuf{\aseq}{n}$ denote the suffix $s_n \to s_{n+1} \to \ldots$ of $\aseq$.
  We employ the projection of parallel steps to
  close the elementary diagrams with top $s_n \to s_{n+1}$ and left $\pstep{\bstep_n}{s_n}{s_n'}$,
  that is, 
  we construct the projections $\bstep_{i+1} = \project{\bstep_i}{(s_i \to s_{i+1})}$ (right)
  and $\project{(s_i \to s_{i+1})}{\bstep_i}$ (bottom).
  %For $n \in \nat$, $n \le \alpha$ let $\aseq_n$ denote the finite prefix of $\aseq$ of length $n$, that is, $\aseq_n \funin s \to^n s_n$.
  Then by induction on $n$ using Lemma~\ref{lem:proj:lift} there exists
  for every $1 \le n \le \alpha$
  a term $s_n'$, and parallel steps $\pstep{\astep_n}{s_n}{s_n'}$ and $s_{n-1}' \pred s_n'$.
  See Figure~\ref{fig:strip:proof} for an overview.

  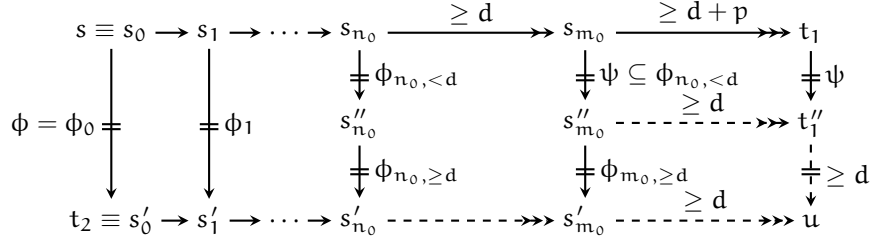
\begin{figure}[hpt!]
  \begin{center}
  \begin{tikzpicture}
    [inner sep=1mm,
    node distance=25mm,
    terminal/.style={
      rectangle,rounded corners=2.25mm,minimum size=4mm,
      very thick,draw=black!50,top color=white,bottom color=black!20,
    },
    >=stealth,
    red/.style={thick,->>},
    infred/.style={
      thick,
      shorten >= 1mm,
      decoration={
        markings,
        mark=at position -3mm with {\arrow{stealth}},
        mark=at position -1.5mm with {\arrow{stealth}},
        mark=at position 1 with {\arrow{stealth}}
      },
      postaction={decorate}
    }]
  
    \node (s) {$s \equiv s_0$};
    \node (s1) [node distance=13mm,right of=s] {$s_1$};
    \node (s2) [node distance=10mm,right of=s1] {$\ldots$};
    \node (sn0) [node distance=10mm,right of=s2] {$s_{n_0}$};
    \node (sm0) [node distance=30mm,right of=sn0] {$s_{m_0}$};
    \node (t1) [node distance=30mm,right of=sm0] {$t_1$};
    \node (t2) [below of=s] {$t_2 \equiv s_0'$};
    \node (s1') [below of=s1] {$s_1'$};
    \node (s2') [below of=s2] {$\ldots$};
    \node (sn0') [below of=sn0] {$s_{n_0}'$};
    \node (sm0') [below of=sm0] {$s_{m_0}'$};
    \node (u) [below of=t1] {$u$};

    \node (sn0'') [node distance=12mm,below of=sn0] {$s_{n_0}''$};
    \node (sm0'') [node distance=12mm,below of=sm0] {$s_{m_0}''$};
    \node (t1'') [node distance=12mm,below of=t1] {$t_1''$};

    \draw [red,->] (s) -- (s1);
    \draw [red,->] (s1) -- (s2);
    \draw [red,->] (s2) -- (sn0);
    \draw [red,->>] (sn0) -- (sm0) node [midway,above] {$\ge d$};
    \draw [infred] (sm0) -- (t1) node [midway,above] {$\ge d+p$};

    \draw [red,->] (s) -- (t2) node [midway,left,xshift=-.5ex] {$\astep = \astep_0$};
    \draw [thick] ($(s)!.5!(t2) + (-.75ex,-.20ex)$) -- ($(s)!.5!(t2) + (.75ex,-.20ex)$);
    \draw [thick] ($(s)!.5!(t2) + (-.75ex,.20ex)$) -- ($(s)!.5!(t2) + (.75ex,.20ex)$);

    \draw [red,->] (s1) -- (s1') node [midway,right,xshift=.5ex] {$\astep_1$};
    \draw [thick] ($(s1)!.5!(s1') + (-.75ex,-.20ex)$) -- ($(s1)!.5!(s1') + (.75ex,-.20ex)$);
    \draw [thick] ($(s1)!.5!(s1') + (-.75ex,.20ex)$) -- ($(s1)!.5!(s1') + (.75ex,.20ex)$);

    \draw [red,->] (sn0) -- (sn0'') node [midway,right,xshift=.5ex] {$\astep_{n_0,<d}$};
    \draw [thick] ($(sn0)!.5!(sn0'') + (-.75ex,-.20ex)$) -- ($(sn0)!.5!(sn0'') + (.75ex,-.20ex)$);
    \draw [thick] ($(sn0)!.5!(sn0'') + (-.75ex,.20ex)$) -- ($(sn0)!.5!(sn0'') + (.75ex,.20ex)$);
    \draw [red,->] (sn0'') -- (sn0') node [midway,right,xshift=.5ex] {$\astep_{n_0,\ge d}$};
    \draw [thick] ($(sn0'')!.5!(sn0') + (-.75ex,-.20ex)$) -- ($(sn0'')!.5!(sn0') + (.75ex,-.20ex)$);
    \draw [thick] ($(sn0'')!.5!(sn0') + (-.75ex,.20ex)$) -- ($(sn0'')!.5!(sn0') + (.75ex,.20ex)$);

    \draw [red,->] (sm0) -- (sm0'') node [midway,right,xshift=.5ex] {$\bstep \subseteq \astep_{n_0,<d}$};
    \draw [thick] ($(sm0)!.5!(sm0'') + (-.75ex,-.20ex)$) -- ($(sm0)!.5!(sm0'') + (.75ex,-.20ex)$);
    \draw [thick] ($(sm0)!.5!(sm0'') + (-.75ex,.20ex)$) -- ($(sm0)!.5!(sm0'') + (.75ex,.20ex)$);
    \draw [red,->] (sm0'') -- (sm0') node [midway,right,xshift=.5ex] {$\astep_{m_0,\ge d}$};
    \draw [thick] ($(sm0'')!.5!(sm0') + (-.75ex,-.20ex)$) -- ($(sm0'')!.5!(sm0') + (.75ex,-.20ex)$);
    \draw [thick] ($(sm0'')!.5!(sm0') + (-.75ex,.20ex)$) -- ($(sm0'')!.5!(sm0') + (.75ex,.20ex)$);

    \draw [red,->] (t1) -- (t1'') node [midway,right,xshift=.5ex] {$\bstep$};
    \draw [thick] ($(t1)!.5!(t1'') + (-.75ex,-.20ex)$) -- ($(t1)!.5!(t1'') + (.75ex,-.20ex)$);
    \draw [thick] ($(t1)!.5!(t1'') + (-.75ex,.20ex)$) -- ($(t1)!.5!(t1'') + (.75ex,.20ex)$);
    \draw [red,->,dashed] (t1'') -- (u) node [midway,right,xshift=.5ex] {$\ge d$};
    \draw [thick] ($(t1'')!.5!(u) + (-.75ex,-.20ex)$) -- ($(t1'')!.5!(u) + (.75ex,-.20ex)$);
    \draw [thick] ($(t1'')!.5!(u) + (-.75ex,.20ex)$) -- ($(t1'')!.5!(u) + (.75ex,.20ex)$);

    \draw [red,->] (t2) -- (s1');
    \draw [red,->] (s1') -- (s2');
    \draw [red,->] (s2') -- (sn0');
    \draw [infred,dashed] (sn0') -- (sm0');
    \draw [infred,dashed] (sm0'') -- (t1'')  node [midway,above] {$\ge d$};
    \draw [infred,dashed] (sm0') -- (u)  node [midway,above] {$\ge d$};
  \end{tikzpicture}
  \end{center}\vspace{-3ex}
  \caption{\textit{Strip/Lift Lemma, proof overview.}}
  \label{fig:strip:proof}
  \end{figure}

  We show that the rewrite sequence constructed at the bottom $s_0' \pred s_1' \pred \ldots$
  of Figures~\ref{fig:strip:collapse} and~\ref{fig:strip} is strongly convergent,
  and that the parallel steps $\astep_i$ have a limit for $i \to \infty$
  (parallel steps are always strongly convergent).

  Let $d \in \nat$ be arbitrary.
  By strong convergence of $\aseq$ there exists $n_0 \in \nat$ such that
  all steps in $\seqsuf{\aseq}{n_0}$ are at depth $\ge d$.
  Since $\astep_{n_0}$ is a parallel step there are only finitely many
  redexes $\astep_{n_0,<d} \subseteq \astep_{n_0}$ in $\astep_{n_0}$ rooted above depth $d$.
  By projection of $\astep_{n_0}$ along $\seqsuf{\aseq}{n_0}$
  no fresh redexes above depth $d$ can be created.
  The steps in $\astep_{n_0,<d}$ may be cancelled out due to overlaps,
  nevertheless, for all $m \ge n_0$ the set of steps above depth $d$ in $\astep_m$
  is a subset of $\astep_{n_0,<d}$.

  Let $p$ be the maximal depth of a left-hand side of a rule applied in $\astep_{n_0,<d}$.
  By strong convergence of $\aseq$ there exists $m_0 \ge n_0 \in \nat$ such that
  all steps in $\seqsuf{\aseq}{n_0}$ are at depth $\ge d+p$.
  As a consequence the steps $\bstep$ in $\astep_{m_0}$ rooted above depth $d$
  will stay fixed throughout the remainder of the projection.
  Then for all $m \ge m_0$ the parallel step $\astep_m$
  can be split into $\astep_m = s_m \pred_\bstep s_m'' \pred_{\astep_{m,\ge d}} s_m'$
  where $\astep_{m,\ge d}$ consists of the steps of $\astep_m$ at depth $\ge d$.
  Since $d$ was arbitrary, it follows that projection of $\astep$ over $\aseq$ has a limit.
  Moreover the steps of the projection of $\seqsuf{\aseq}{m_0}$ over $\astep_{m_0}$
  are at depth $\ge d + p - p = d$ since rules with pattern depth $\le p$ can lift steps by at most by $p$.
  Again, since $d$ was arbitrary, it follows that the projection of $\aseq$ over $\astep$ is strongly convergent.

  Finally, both constructed rewrite sequences (bottom and right) converge towards the same limit $u$
  since all terms $\{s_m', s_m'' \where m \ge m_0\}$ coincide up to depth $d-1$
  (the terms $\{s_m \where m \ge m_0\}$ coincide up to depth $d + p -1$ and the lifting effect of the steps $\astep_m$ is limited by $p$).
\end{proof}

\begin{theorem}\label{thm:cr}
Every \woTRS{} without collapsing rules is infinitary confluent.
\end{theorem}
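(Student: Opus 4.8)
The plan is to carry out the standard tiling argument for infinitary confluence, using the Strip/Lift Lemma (Lemma~\ref{lem:strip}) as the elementary building block and the Compression Lemma (Lemma~\ref{lem:compression}) to keep all reductions of length at most $\omega$. The decisive point --- and the only place where collapse-freeness enters --- will be that the \emph{boundary} of the tiling is again a pair of strongly convergent reductions, which rests on the refined depth estimates of Lemmas~\ref{lem:proj:lift} and~\ref{lem:strip}.

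To establish $\CRinf$ it suffices to produce, for any two strongly convergent reductions $s \to^\alpha t_1$ and $s \to^\beta t_2$, a common reduct. First I would compress both to length $\le\omega$, say $\aseq\funin s \equiv a_0 \to a_1 \to \cdots$ with $\lim_i a_i = t_1$ and $\bseq\funin s \equiv b_0 \to b_1 \to \cdots$ with $\lim_j b_j = t_2$ (if one of them is finite, the construction degenerates to a finite induction). Reading each step $b_j \to b_{j+1}$ as a parallel step $\pstep{\astep_j}{b_j}{b_{j+1}}$, let $\sigma_0$ be $\aseq$ and $c_0$ be $t_1$, and build the tiling inductively: from $\sigma_j\funin b_j \to^{\le\omega} c_j$ and $\astep_j$, Lemma~\ref{lem:strip} provides a term $c_{j+1}$, a reduction $\sigma_{j+1}\funin b_{j+1} \to^{\le\omega} c_{j+1}$, and a parallel step $\pstep{\bstep_j}{c_j}{c_{j+1}}$ closing a commuting square. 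This produces the right-hand column $c_0 \pred c_1 \pred c_2 \pred \cdots$ and the rows $\sigma_0, \sigma_1, \ldots$. It then remains to show: (i) the column is strongly convergent, with some limit $u$, so that $t_1 \equiv c_0 \to^{\le\omega} u$ by concatenating the finite parallel steps $\bstep_j$; and (ii) the rows $\sigma_j$ converge, as $j \to \infty$, to a strongly convergent reduction $t_2 \to^{\le\omega} u$ (using $\lim_j b_j = t_2$ and $\lim_j c_j = u$). Given (i) and (ii), $u$ is a common reduct of $t_1$ and $t_2$, so $\CRinf$ --- and hence also $\UNinf$ --- holds.

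The hard part will be (i) and (ii). A naive iteration of Lemma~\ref{lem:strip} only yields that $\bstep_j$ has minimal depth $\ge \bfunap{\min}{d_{\astep_j}}{d_{\sigma_j}+1}$ and that $d_{\sigma_{j+1}} \ge \bfunap{\min}{d_{\sigma_j}}{d_{\astep_j}+1}$; since $d_{\astep_j} \to \infty$ by strong convergence of $\bseq$, this would close the argument \emph{if} $d_{\sigma_j} \to \infty$ --- but that need not hold, for instance when $\aseq$ contains a root step whose residual survives into every $\sigma_j$, so that $d_{\sigma_j} = 0$ throughout while the column is nonetheless strongly convergent. Hence one must reason per redex. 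Fixing a depth $d$: all but finitely many steps of $\aseq$ and of $\bseq$ take place at depth $> d$, so tracing residuals shows that the part of $\sigma_j$ situated above depth $d$ is, uniformly in $j$, governed by finitely many steps of bounded pattern depth $P$; and, because there are \emph{no collapsing rules}, a contraction at depth $e$ can lift a nested redex at most to depth $e+1$ (Lemma~\ref{lem:proj:lift}). Consequently a redex of $\astep_j$ lying at depth $> d + P$ --- which is the case for all large $j$ --- cannot be dragged up to depth $\le d$ along $\sigma_j$, so $\bstep_j$ acts entirely below depth $d$ for all large $j$; symmetrically, the part of each $\sigma_j$ above depth $d$ stabilises as $j$ grows. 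Since $d$ was arbitrary, both the column and the rows are strongly convergent with a common limit $u$, which completes the proof. That collapse-freeness cannot be dispensed with here is precisely the lesson of the $\ssuc\spre$-example with two collapsing rules treated above, for which $\CRinf$ --- indeed even $\UNinf$ --- fails.
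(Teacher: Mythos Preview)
Your overall plan --- project $\aseq$ across $\bseq$ step by step via the Strip/Lift Lemma and then argue that both boundaries of the resulting tiling are strongly convergent --- is a reasonable alternative, and you are right that the naive depth bookkeeping from Lemma~\ref{lem:strip} alone does not suffice (since $d_{\sigma_j}$ need not tend to infinity). But your proposed repair contains a real gap. You assert that a redex of $\astep_j$ at depth $> d + P$ ``cannot be dragged up to depth $\le d$ along $\sigma_j$'', invoking only that a single non-collapsing contraction at depth $e$ leaves residuals at depth $\ge e+1$. That inequality bounds the effect of \emph{one} step; it does not bound cumulative lifting. A step at depth $e\le d$ with pattern depth $\le P$ can raise a nested redex by as much as $P-1$, so $K$ shallow steps in $\sigma_j$ can raise it by up to $K(P-1)$. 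Your threshold must therefore be of the form $d + K\cdot P$ (or similar), where $K$ bounds the number of steps of $\sigma_j$ at depth $\le d$; and you must also \emph{prove} that such a $K$ exists uniformly in $j$ --- you claim ``uniformly in $j$'' but give no argument. This last point is itself delicate: $\sigma_{j+1}$ is produced from $\sigma_j$ by projection (with weakly orthogonal orthogonalization) and possibly compression, so residual-tracing from $\sigma_0$ through all the $\sigma_j$ is not immediate.

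The paper sidesteps this entire difficulty by organising the construction by depth rather than by step. After compressing both reductions to length $\le\omega$, one takes $d$ to be the minimal depth of any step, splits each reduction as a \emph{finite} prefix followed by a tail in which all steps lie at depth $> d$, and closes the two finite prefixes by the finitary diamond property for parallel steps (Lemma~\ref{lem:proj:lift}) to a common term $s'$. The Strip/Lift Lemma is then applied only \emph{finitely many} times --- once per parallel step in the finite joins $s_1 \pred^* s'$ and $s_2 \pred^* s'$ --- to project the infinite tails, yielding reductions $s'\ired t_1'$ and $s'\ired t_2'$ whose steps are all at depth $> d$, together with joins $t_i \ired t_i'$ at depth $\ge d$. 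One now recurses with $d+1$. Because the depth strictly increases at every stage and only finitely many strip applications are needed per stage, strong convergence of the joining sequences is immediate, with no need to control an unbounded number of shallow steps inside a changing $\sigma_j$.
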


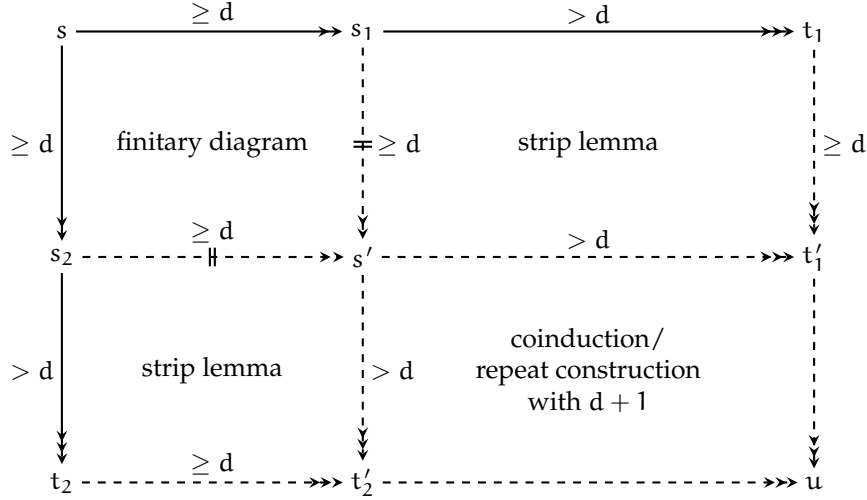
\begin{figure}[hpt!]
\begin{center}
\begin{tikzpicture}
  [inner sep=1mm,
   node distance=30mm,
   terminal/.style={
     rectangle,rounded corners=2.25mm,minimum size=4mm,
     very thick,draw=black!50,top color=white,bottom color=black!20,
   },
   >=stealth,
   red/.style={thick,->>},
   infred/.style={
     thick,
     shorten >= 1mm,
     decoration={
       markings,
       mark=at position -3mm with {\arrow{stealth}},
       mark=at position -1.5mm with {\arrow{stealth}},
       mark=at position 1 with {\arrow{stealth}}
     },
     postaction={decorate}
   }]

  \node (s) {$s$};
  \node (s1) [node distance=40mm,right of=s] {$s_1$};
  \node (t1) [node distance=60mm,right of=s1] {$t_1$};
  \node (s2) [below of=s] {$s_2$};
  \node (t2) [below of=s2] {$t_2$};
  \node (s') [below of=s1] {$s'$};
  \node (t1') [below of=t1] {$t_1'$};
  \node (t2') [below of=s'] {$t_2'$};
  \node (u) [below of=t1'] {$u$};

  \draw [red] (s) -- (s1);
  \draw [infred] (s1) -- (t1);
  \draw [red] (s) -- (s2);
  \draw [infred] (s2) -- (t2);

  \draw [red,dashed] (s1) -- (s');
  \draw [thick] ($(s1)!.5!(s') + (-.75ex,-.20ex)$) -- ($(s1)!.5!(s') + (.75ex,-.20ex)$);
  \draw [thick] ($(s1)!.5!(s') + (-.75ex,.20ex)$) -- ($(s1)!.5!(s') + (.75ex,.20ex)$);

  \draw [red,dashed] (s2) -- (s');
  \draw [thick] ($(s2)!.5!(s') + (-.20ex,-.75ex)$) -- ($(s2)!.5!(s') + (-.20ex,.75ex)$);
  \draw [thick] ($(s2)!.5!(s') + (.20ex,-.75ex)$) -- ($(s2)!.5!(s') + (.20ex,.75ex)$);

  \draw [infred,dashed] (s') -- (t1');
  \draw [infred,dashed] (t1) -- (t1');
  \draw [infred,dashed] (s') -- (t2');
  \draw [infred,dashed] (t2) -- (t2');
  \draw [infred,dashed] (t1') -- (u);
  \draw [infred,dashed] (t2') -- (u);

  \node at ($(s)!.5!(s1)$) [anchor=south] {$\ge d$};
  \node at ($(s)!.5!(s2)$) [anchor=east] {$\ge d$};
  \node at ($(s1)!.5!(t1)$) [anchor=south] {$>d$};
  \node at ($(s2)!.5!(t2)$) [anchor=east] {$>d$};
  \node at ($(s2)!.5!(s')$) [anchor=south,yshift=.5ex] {$\ge d$};
  \node at ($(s1)!.5!(s')$) [anchor=west,xshift=.5ex] {$\ge d$};

  \node at ($(s')!.5!(t1')$) [anchor=south] {$> d$};
  \node at ($(s')!.5!(t2')$) [anchor=west] {$> d$};

  \node at ($(t2)!.5!(t2')$) [anchor=south] {$\ge d$};
  \node at ($(t1)!.5!(t1')$) [anchor=west] {$\ge d$};

  \node at ($(s)!.5!(s')$) {finitary diagram};
  \node at ($(s')!.5!(t1)$) {strip lemma};
  \node at ($(s')!.5!(t2)$) {strip lemma};
  \node at ($(s')!.5!(u)$) {\parbox{3cm}{\centerline{coinduction/}\centerline{repeat construction}\centerline{with $d+1$}}};
\end{tikzpicture}
\end{center}\vspace{-3ex}
\caption{\textit{Infinitary confluence.}}
\label{fig:confluence}
\end{figure}

\begin{proof}
  An overview of the proof is given in Figure~\ref{fig:confluence}.
  Let $\aseq \funin s \to^\alpha t_1$ and $\bseq \funin s \to^\beta t_2$ be two rewrite sequences.
  By compression we may assume $\alpha \le \omega$ and $\beta \le \omega$.
  Let $d$ be the minimal depth of any rewrite step in $\aseq$ and $\bstep$.
  Then $\aseq$ and $\bseq$ are of the form
  $\aseq \funin s \to^* s_1 \to^{\le\omega} t_1$
  and
  $\bseq \funin s \to^* s_2 \to^{\le\omega} t_2$
  such that all steps in $s_1 \to^{\le\omega} t_1$ and $s_2 \to^{\le\omega} t_2$ at depth $> d$.
  
  Then $s \to^* s_1$ and $s \to^* s_2$ can be joined by finitary diagram completion employing the diamond property for parallel steps.
  If follows that there exists a term $s'$ and finite sequences of (possibly infinite) parallel steps $s_1 \pred^* s'$ and $s_2 \pred^* s'$
  all steps of which are at depth $\ge d$ (Lemma~\ref{lem:proj:lift}).
  We project 
  $s_1 \to^{\le\omega} t_1$ over $s_1 \pred^* s'$ 
  $s_2 \to^{\le\omega} t_2$ over $s_2 \pred^* s'$ 
  by repeated application of the Lemma~\ref{lem:strip},
  obtaining rewrite sequences 
  $t_1 \ired t_1'$,
  $s' \ired t_1'$,
  $t_2 \ired t_2'$, and
  $s' \ired t_2'$ with depth $\ge d$, $> d$, $\ge d$, and $> d$, respectively.
  As a consequence we have $t_1'$, $s'$ and $t_2'$ coincide up to (including) depth $d$.
  Recursively applying the construction to the rewrite sequences $s' \ired t_1'$ and $s' \ired t_2'$
  yields strongly convergent rewrite sequences 
  $t_2 \ired t_2' \ired t_2'' \ired \ldots$ and $t_1 \ired t_1' \ired t_1'' \ired \ldots$
  where the terms $t_1^{(n)}$ and $t_2^{(n)}$ coincide up to depth $d + n -1$.
  Thus these rewrite sequences converge towards the same limit $u$.
\end{proof}

We consider an example to illustrate that non-collapsingness is a necessary condition for Theorem~\ref{thm:cr}.
\begin{example}\label{ex:collapse}
  Let $\atrs$ be a TRS over the signature $\{f,a,b\}$ consisting of the rule:
  \begin{align*}
    \bfunap{f}{x}{y} &\to x
  \end{align*}
  Then, using a self-explaining recursive notation, 
  the term $s = \bfunap{f}{\bfunap{f}{s}{b}}{a}$ rewrites in $\omega$ many steps to 
  $t_1 = \bfunap{f}{t_1}{a}$ as well as $t_2 = \bfunap{f}{t_2}{b}$ which have no common reduct.
  The TRS $\atrs$ is weakly orthogonal (even orthogonal) but not confluent.
\end{example}

\section{The Diamond Property for Developments}\label{sec:diamond}

We prove that infinitary developments in weakly orthogonal TRSs 
without collapsing rules have the diamond property.
For this purpose we establish an orthogonalization algorithm for co-initial developments,
that is, we make the developments orthogonal to each other by elimination of overlaps.
Since overlapping steps in weakly orthogonal TRSs have the same targets, 
we can replace  one by the other.
The challenge is to reorganize the steps in such a way that no new overlaps are created.

\begin{figure}[hpt!]
\begin{center}
  \includegraphics{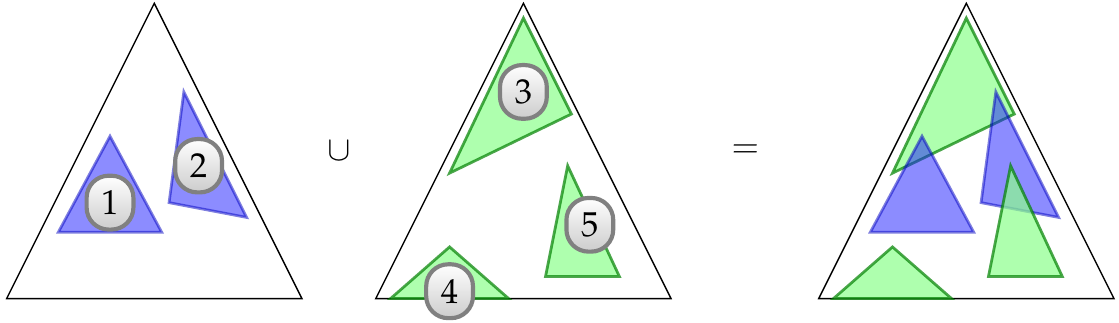}
\end{center}\vspace{-3ex}
\caption{\textit{Orthogonalization in a weakly orthogonal TRS.}}
\label{fig:orthogonalization}
\end{figure}

Consider for example Figure~\ref{fig:orthogonalization}, where the redexes 2 and 3 overlap with each other.
When trying to solve this overlap,
we have to be careful since replacing the redex 2 by 3 as well as 3 by 2 creates new conflicts.

The case of finitary weakly orthogonal rewriting is treated by \mbox{van Oostrom} 
and \mbox{de Vrijer} in~\cite[Theorem 8.8.23]{terese:2003}. 
They employ an inside-out algorithm, that is, inductively extend an orthogonalization 
of the subtrees to the whole tree. The basic observation is that you overcome the difficulties 
pointed out above by starting at the bottom of the tree and solving overlaps by choosing 
the deeper (innermost) redex. 

\begin{example}
  We consider Figure~\ref{fig:orthogonalization} and 
  apply the orthogonalization algorithm from \cite[Theorem 8.8.23]{terese:2003}. 
  We start at the bottom of the tree. 
  The first overlap we find is between the redexes 2 and 5; 
  this is removed by replacing 2 with 5. 
  Then the overlap between 2 and 3 has also disappeared.
  The only remaining overlap is between the redexes 3 and 1. Hence we replace 3 by 1.
  As result we obtain two orthogonal developments $\{1,5\}$ and $\{1,4,5\}$.
\end{example}

Note that the above algorithm does not carry over to the case of infinitary developments
since we may have infinite chains of overlapping redexes and thus have no bottom to start at.
This is illustrated in Figure~\ref{fig:chain}.
\begin{figure}[hpt!]
  \begin{center}
    \includegraphics{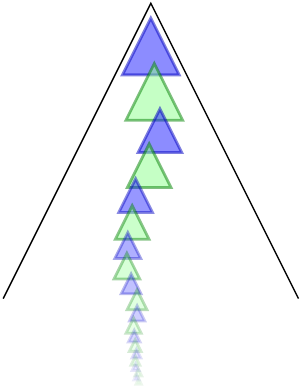}
  \end{center}
  \caption{\textit{Infinite chain of overlaps.}}
  \label{fig:chain}
\end{figure}

\begin{example}{
  \renewcommand{\a}{\funap{A}}
  As an example where such an infinite chain of overlaps arises
  we consider the TRS $R$ consisting of the rule:
  \begin{align*}
     \a{\a{\a{x}}} \to \a{x}
  \end{align*}
  together with two developments of blue and green redexes in the term $A^{\!\omega}$:
  \begin{center}
    \includegraphics{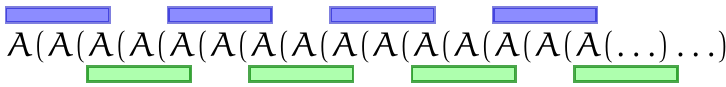}
  \end{center}
  The blue redexes are marked by overlining, the green redexes by underlining.
}\end{example}

\begin{definition}\label{def:overlap}\
  Let $R$ be a \woTRS{}, and $t \in \iter{\asig}$ a term.
  For redexes $u$ and $v$ in $t$ we write $u \poverlap v$ 
  if the pattern of $u$ and $v$ overlap in $t$.
\end{definition}

\begin{definition}
  Let $R$ be a \woTRS{}, $t \in \iter{\asig}$ a term, and $\sigma, \delta$ two developments 
  of sets of redexes $U$ and $V$ in $t$, respectively.
  We call $\sigma$ and $\delta$ \emph{orthogonal (to each other)}
  if for all redexes $u \in U$ and $v \in V$ with $u \poverlap v$
  we have that $u = v$ (redexes are the same, that is, with respect to the same rule and position).

  An \emph{orthogonalization $\pair{\sigma'}{\delta'}$ of $\sigma$ and $\delta$}
  consists of orthogonal developments $\sigma'$, $\delta'$ 
  of redexes in $t$ such that the results (targets) of $\sigma'$ and $\delta'$
  coincide with the results of $\sigma$ and $\delta$, respectively.
\end{definition}

The overlap relation $\poverlap$ is symmetric, and hence
the redexes form clusters with respect to the transitive closure $\poverlap^*$.
If such a cluster contains parallel redexes, 
then the redex-cluster is called \emph{Y-cluster} in \cite{kete:klop:oost:2004a}.
The redexes in a \mbox{Y-cluster} we call Y-redexes; they can be defined as follows.

\begin{definition}\label{def:noeffect}
  Let $R$ be a \woTRS{}, and $t \in \iter{\asig}$ a term.
  A redex $u$ in $t$ is called \emph{Y-redex}
  if there exist redexes $v_1$, $v_2$ at disjoint positions in $t$ 
  such that $u \poverlap^* v_1$ and $u \poverlap^* v_2$ 
  (see also Figure~\ref{fig:cases:orthogonalization}, cases (ii) and (iv)).
\end{definition}

At first sight one might expect that Y-redexes
are due to trivial rules of the form $\ell \to r$ with $\ell \equiv r$.
However, the following example illustrates that this is in general not the case
(for another example see \cite[p.508, middle]{terese:2003}).

\begin{example}\label{ex:noeffect}
  Let $\atrs$ consist of the following rules:
  \begin{align*}
    (\rho_1)\;\; \funap{f}{\bfunap{g}{x}{y}} &\to \funap{f}{\bfunap{g}{y}{x}} &
    (\rho_2)\;\;\bfunap{g}{a}{a} &\to \bfunap{g}{a}{a} &
    (\rho_3)\;\;a &\to a
  \end{align*}
  We consider the term $t \equiv \funap{f}{\bfunap{g}{a}{a}}$
  which contains a $\rho_1$-redex $u_{\posemp}$ at the root,
  a $\rho_2$-redex $u_{1}$ at position $1$,
  and two $\rho_3$-redexes $u_{11}$ and $u_{12}$ at position $11$ and $12$, respectively.
  We have $u_{\posemp} \poverlap u_{1}$, $u_{1} \poverlap u_{11}$ and $u_{1} \poverlap u_{12}$.
  Since $u_{11}$, and $u_{12}$ are at disjoint positions, 
  it follows that $u_{\posemp}$ is a Y-redex.
  However, the rule $\rho_1$ permutes its subterms, and 
  thus in general may very well have an effect.
\end{example}

Despite the above example, 
it is always safe to drop Y-redexes from developments without changing the outcome of the development.
This result is implicit in \cite{kete:klop:oost:2004a}.
In particular in \cite[Remark~4.38]{kete:klop:oost:2004a}
it is mentioned that Y-clusters are a generalisation of Takahashi-configurations.

\begin{lemma}\label{lem:noeffect}
Let $R$ be a \woTRS{}, $t \in \iter{\asig}$ a term, 
and $U$ a set of non-overlapping redexes in $t$ which have a development.
If $u \in U$ is a Y-redex, then the development
of $U$ results in the same term as the development of $U \setminus \{u\}$.
\end{lemma}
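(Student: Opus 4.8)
The plan is to exploit the fact, established in~\cite{kete:klop:oost:2004a}, that overlapping redexes in a weakly orthogonal TRS have the same target, so that a single redex can be replaced by any redex overlapping it without changing the net effect of a one-step ``parallel'' contraction. First I would set up notation: let $u$ be the given Y-redex in $U$, and let $v_1$, $v_2$ be redexes at disjoint positions $p_1 \perp p_2$ in $t$ with $u \poverlap^* v_1$ and $u \poverlap^* v_2$. The intuition I want to formalize is that the effect of contracting $u$ is ``already accounted for'' by the rest of the development, because $u$'s action can be decomposed (via the overlap chains) into pieces that happen below disjoint positions, and those pieces are realized by the redexes reachable from $v_1$ and from $v_2$ within the cluster.

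The key steps, in order, would be: (1) Recall from the definition of $\poverlap$ and from weak orthogonality that if $w \poverlap w'$ then contracting $w$ and contracting $w'$ yield the same term; iterate along a finite overlap chain to get that all redexes in a single $\poverlap^*$-cluster produce the same one-step result when contracted in isolation. (2) Show that $u$'s pattern overlaps (transitively) redexes straddling two disjoint positions, and argue that the ``contribution'' of $u$ to the development of $U$ factors through positions incomparable to each other; concretely, I would argue that the residual of the development $U \setminus \{u\}$ already performs, below $p_1$ and below $p_2$ separately, exactly the rewriting that $u$ would have performed, so that inserting $u$ is idempotent. (3) Conclude by comparing the two complete developments: since $U$ has a development by hypothesis, and Y-redexes are known (implicitly in~\cite{kete:klop:oost:2004a}, cf.\ the remark on Takahashi-configurations) to be droppable, the development of $U$ and of $U \setminus \{u\}$ reach the same term. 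Because developments in weakly orthogonal systems end at a unique result (the finite-development property plus confluence of the sub-rewriting restricted to $U$), it suffices to check termwise equality of the targets, which reduces to a local check at the cluster of $u$.

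The technical heart — and the step I expect to be the main obstacle — is step~(2): precisely stating and proving that a Y-redex's action is ``absorbed'' by the rest of the development. The difficulty is that, as Example~\ref{ex:noeffect} shows, a Y-redex need not be a trivial-rule redex, so one cannot simply say ``$u$ does nothing''; rather, one must show that whatever $u$ does is \emph{also} done by redexes sitting at disjoint positions within the same cluster, and hence is duplicated rather than lost when $u$ is removed. I would handle this by a careful case analysis on how the pattern of $u$ sits relative to $v_1$ and $v_2$ (the cases (i)--(iv) alluded to in Figure~\ref{fig:cases:orthogonalization}), using left-linearity to track which parts of $\lhs(u)$ are matched by proper subterms versus by the overlap region, and invoking weak orthogonality to rewrite the overlap region via the deeper redexes. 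The rest is bookkeeping: once the cluster-local equality of targets is in hand, a standard induction on the (finitely many, by finiteness of developments) steps of the two developments, keeping them synchronized outside the cluster of $u$, finishes the proof.
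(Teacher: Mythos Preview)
Your proposal has a genuine gap at its technical heart, step~(2). You frame the argument as showing that $u$'s effect is ``absorbed'' or ``duplicated'' by other redexes in the development $U \setminus \{u\}$, but this is not what is going on, and it cannot work as stated: the witnesses $v_1, v_2$ for $u$ being a Y-redex are arbitrary redexes in $t$ and need not belong to $U$ at all, so there is no reason the development of $U \setminus \{u\}$ should ``already perform'' anything related to them.

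The key observation you are missing is much simpler and stronger: contracting \emph{any} redex in the Y-cluster of $u$ is the identity on $t$. Your step~(1) correctly notes that all redexes in the $\poverlap^*$-cluster yield the same one-step target. Now just finish that thought: $v_1$ and $v_2$ are at disjoint positions $p_1 \perp p_2$, so contracting $v_1$ leaves $\subtrmat{t}{p_2}$ untouched and contracting $v_2$ leaves $\subtrmat{t}{p_1}$ untouched; since both contractions reach the \emph{same} term, that term agrees with $t$ everywhere, i.e.\ it \emph{is} $t$. In particular contracting $u$ does nothing. You misread Example~\ref{ex:noeffect}: the point there is not that $u_\epsilon$ has a nontrivial effect that must be compensated, but that although the \emph{rule} $\rho_1$ is nontrivial, applying it to $\funap{f}{\bfunap{g}{a}{a}}$ returns $\funap{f}{\bfunap{g}{a}{a}}$. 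With this in hand, the remaining work is only to argue that dropping a no-op step from the development does not disturb the residuals of the other redexes in $U$; the paper handles this by developing $U$ innermost-first, so that when $u$'s turn comes there are no $U$-redexes left inside it to be copied or deleted. Your appeal in step~(3) to ``finite-development plus confluence'' is also risky here, since the diamond property for infinitary developments is exactly what this lemma is being used to establish.
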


\begin{proof}
  Since $u$ is a Y-redex there exist redexes 
  $v_1,v_2$ at disjoint positions in $t$ such that
  $v_1 \poverlap^* u \poverlap^* v_2$.
  By weak orthogonality overlapping redexes have the same effect.
  Whenever $w_1 \poverlap w_2 \poverlap w_3$,
  it follows that $w_1$ has the same effect as $w_3$.
  Consequently all redexes in the Y-cluster $Y_u = \{v \where u \poverlap^* v\}$ of $u$
  have the same effect.
  However, $v_1$ and $v_2$ are at disjoint positions, and
  thus it follows that ($*$) contraction of any redex in $Y_u$ leaves $t$ unchanged.

  We now develop $U$ using an innermost strategy (from bottom to top).
  Then reducing redexes below the pattern of $Y_u$ leaves the Y-cluster
  unchanged, and if a redex $v$ of $Y_u$ (among which is $u$) is reduced,
  then by ($*$) the term is left unchanged, and by innermost strategy
  there are no redexes nested in $v$ which could be influenced (copied/ deleted).
  Thus contracting $u$ has no effect.
\end{proof}

We also give the following alternative proof using results from \cite{kete:klop:oost:2004a}.

\begin{proof}[Proof (by Vincent van Oostrom).]
We cut out the Y-cluster (union of the redex patterns), and introduce distinct new variables
for the cut-off subterms. Contracting a redex in the Y-cluster leaves the Y-cluster unchanged.
In particular no subterms matched by variables are moved, copied or deleted.
As a consequence contracting $u$ can only affect redexes which are in the Y-cluster of $u$,
and thus in turn their contraction does not change the term.
It follows that $u$ can be dropped from the development.
\end{proof}

We now devise a top--down orthogonalization algorithm.
Roughly, we start at the top of the term and replace overlapping redexes with the outermost one.
However, care has to be taken in situations as depicted in Figure~\ref{fig:orthogonalization}.

\begin{theorem}\label{thm:orthogonalization}
  Let $R$ be a \woTRS{},
  $t \in \iter{\asig}$ a (possibly infinite) term, and $\sigma, \delta$ two developments 
  of sets of redexes $U$ and $V$, respectively. Then there exists an orthogonalization of $U$ and $V$.
\end{theorem}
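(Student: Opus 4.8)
The plan is to adapt the inside--out orthogonalization of van Oostrom and de Vrijer~\cite[Theorem~8.8.23]{terese:2003} to infinite terms by running it \emph{top--down} instead of bottom--up, which becomes possible once all Y-redexes have been removed. \emph{First step: eliminate Y-redexes.} I would use Lemma~\ref{lem:noeffect} to delete from both $U$ and $V$ every redex that is a Y-redex of $t$; in the infinitary case this is a (transfinite) iteration, harmless because being a Y-redex is a fixed property of $t$ and each deletion preserves both non-overlappingness and the existence of a development. Afterwards no redex occurring in $\sigma$ or $\delta$ is a Y-redex, and the gain is structural: if $w$ is not a Y-redex then its $\poverlap^*$-cluster contains no two redexes at disjoint positions, so all positions in the cluster are pairwise prefix-comparable, i.e.\ the cluster is a \emph{vertical chain}. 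Since positions are finite words there is no infinite ascending chain, hence every cluster has a unique topmost (outermost) redex, even though it may descend into $t$ forever (Figure~\ref{fig:chain}). Thus, although there is no bottom at which to anchor the finitary induction, there is always a top.

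\emph{Second step: top--down resolution.} I would eliminate overlaps cluster by cluster; clusters meeting only one of $U$, $V$ never cause an orthogonality violation (a $U$-redex and a $V$-redex that overlap lie in the same cluster), and distinct clusters can be handled independently for the same reason, so only the \emph{mixed} clusters need work. Inside a mixed cluster, process the vertical chain from the top: whenever an installed redex still overlaps a not-yet-treated redex strictly below it, replace the inner one by the outer one --- legitimate because in a \woTRS{} overlapping redexes have exactly the same effect. As Figure~\ref{fig:orthogonalization} shows, such a replacement may expose a fresh overlap even deeper down (between the redex just installed and a redex that was hidden below the one removed), which is resolved in the same way; the precise bookkeeping of which redex is installed in which of $U'$, $V'$ in each configuration is the case analysis summarized in Figure~\ref{fig:cases:orthogonalization}. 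The point that makes this well defined is that we always \emph{keep the outer} redex of an overlapping pair, so each resolution acts strictly below the previous one: the procedure descends monotonically into $t$ and is therefore either finite or strongly convergent, i.e.\ it describes a legitimate (possibly infinite) development. Applying this across all mixed clusters yields sets $U'$, $V'$ with developments $\sigma'$, $\delta'$.

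\emph{Third step: correctness, and the main obstacle.} Two things remain to check. Target preservation: every single replacement leaves the result of the development unchanged (equal one-step effects by weak orthogonality; residuals in the variable parts and in nested clusters are unaffected because each step agrees with the previous one down to ever greater depth), so in the limit $\sigma'$, $\delta'$ still have the targets of $\sigma$, $\delta$, and by the same monotone-depth argument $\sigma'$, $\delta'$ are again (strongly convergent) developments. Orthogonality: after processing, every mixed cluster has been resolved, so any overlap between a redex of $U'$ and one of $V'$ is between identical redexes. I expect the main obstacle to be precisely the well-definedness and productivity of the top--down cascade of the second step: one must show that resolving an overlap genuinely pushes all remaining conflicts of that cluster strictly deeper --- so that the limit configuration really is overlap-free and not merely an $\omega$-sequence of ever-postponed conflicts --- while simultaneously preserving the two prescribed targets and handling clusters nested below a variable of another cluster. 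This is exactly where the finitary proof's trivial well-foundedness (``start at the bottom'') is unavailable and has to be replaced by a depth / strong-convergence argument.
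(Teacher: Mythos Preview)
Your preprocessing step---removing all Y-redexes first via Lemma~\ref{lem:noeffect}, so that every remaining $\poverlap^*$-cluster is a vertical chain with a well-defined top---is sound, and is a clean reorganisation of the paper's argument (which instead discovers and eliminates Y-redexes on the fly in cases~(ii) and~(iv) of Figure~\ref{fig:cases:orthogonalization}). The genuine gap is in your second step: the rule ``always keep the outer redex'' is \emph{not} target-preserving, and it is not what the paper's case analysis does. Consider the \woTRS{} with rules
\[
  f(g(h(z)))\to f(g(g(h(z)))),\qquad g(x)\to g(g(x)),\qquad h(y)\to g(h(y))
\]
(one checks that the only critical pairs are trivial). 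In $t=f(g(h(a)))$ let $u$ be the root redex, $v$ the $g$-redex at position~$1$, and $w$ the $h$-redex at position~$11$. Then $u\poverlap v$, $u\poverlap w$, but $v\not\poverlap w$; the cluster $\{u,v,w\}$ is a vertical chain containing no Y-redex, so it survives your preprocessing. Take $U=\{u\}$, $V=\{v,w\}$. Your top-down cascade installs $u$, replaces $v$ by $u$ in $V$, and then (since the installed $u$ still overlaps $w$) replaces $w$ by $u$ as well, ending with $V'=\{u\}$. But the development of $V$ sends $t$ to $f(g(g(g(h(a)))))$, whereas the development of $V'$ sends $t$ to $f(g(g(h(a))))$: the target has changed. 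Your justification ``overlapping redexes have exactly the same effect'' licenses swapping \emph{one} redex for a single overlapping one; it does not let you collapse the non-overlapping pair $\{v,w\}$ into the single redex $u$.

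This is exactly the configuration of case~(iii), and the paper resolves it by the \emph{opposite} move: replace the outer $u\in U$ by the inner $v$, setting $U'=(U\setminus\{u\})\cup\{v\}$ and leaving $V$ unchanged. That is a single-redex swap and hence target-preserving; $U'$ stays non-overlapping because, by the case-(iii) hypothesis, $u$ was the \emph{only} $U$-redex overlapping $v$; and no fresh $U'/V$ conflict arises since $v$ was already in $V$ and therefore overlaps no other $V$-redex. So the case analysis is not mere bookkeeping about which of $U',V'$ receives a redex---it is a genuine choice of \emph{which} redex to keep, and in case~(iii) the answer is the inner one. Your monotonicity claim (``each resolution acts strictly below the previous one'') still goes through under the paper's choice, because the newly installed $v$ lies strictly below $u$.
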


\begin{proof}
  We obtain an orthogonalization of $U$ and $V$ as the limit of the following process.
  If there are no overlaps between $U$ and $V$, then we are finished.
  Here, by overlap we mean non-identical redexes whose patterns overlap.
  Otherwise, if there exist overlaps, let $u \in (U \cup V)$ be a topmost redex 
  (that is, having minimal depth)
  among the redexes which have an overlap.
  Without loss of generality (by symmetry) we assume that $u \in U$ and let $v \in V$
  be a topmost redex among the redexes in $V$ overlapping $u$.
  We distinguish the following cases:
  \begin{figure}[hpt!]
	\begin{center}
          \includegraphics{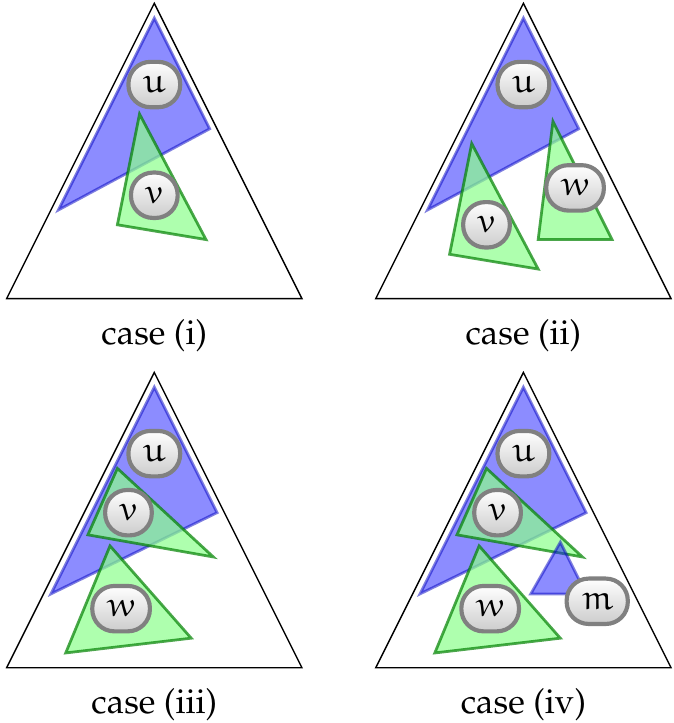}
	\end{center}
	\caption{\textit{Case distinction for the orthogonalization algorithm.}}
	\label{fig:cases:orthogonalization}
  \end{figure}
  \begin{enumerate}[(i)]
	 \item If $v$ is the only redex in $V$ that overlaps with $u$, case (i) of Figure~\ref{fig:cases:orthogonalization},
		   then we can safely replace $v$ by $u$.
  \end{enumerate}
  \noindent
  Otherwise there is a redex $w \in V$, $w \ne v$ and $w$ overlaps $u$.
  \begin{enumerate}[(i)]
	\setcounter{enumi}{1}
	\item Assume that $v$ and $v'$ are at disjoint positions, case (ii) of Figure~\ref{fig:cases:orthogonalization}.
		  Then $u$, $v$, $w$ are Y-redexes and can be dropped from $U$ and $V$ by Lemma~\ref{lem:noeffect}.
  \end{enumerate}
  \noindent
  Otherwise, $v$ and $v'$ are not disjoint, and then $w$ must be nested inside $v$.
  \begin{enumerate}[(i)]
	\setcounter{enumi}{2}
	\item If $u$ is the only redex from $U$ overlapping $v$, case (iii) of Figure~\ref{fig:cases:orthogonalization}, then we can replace $u$ by $v$.
	\item In the remaining case there must be a redex $m \in U$, $m \ne u$ and $m$ overlaps with the redex $v$, 
		  see case (iv) of Figure~\ref{fig:cases:orthogonalization}.
		  Since $U$ and $V$ are developments $u$ cannot overlap with $m$, and $v$ cannot overlap with $w$.
		  We have that $w$ is nested in $v$, both overlapping $u$, but $m$ is below the pattern of $u$, overlapping $v$.
		  Hence $w$ and $m$ must be at disjoint positions 
		  ($v$ cannot tunnel through $w$ to touch $m$); this has also been shown in \cite{kete:klop:oost:2004a}.
		  Then by Lemma~\ref{lem:noeffect} all redexes $u$, $m$, $v$ and $w$ are Y-redexes and can be removed.
  \end{enumerate}
  
  We have shown that it is always possible to solve outermost conflicts without creating fresh ones. Hence we can push the conflicts down to infinity and thereby obtain the orthogonalization of $U$ and $V$.
\end{proof}

We obtain the diamond property as corollary.
\begin{corollary}\label{cor:diamond}
  For every \woTRS{} without collapsing rules, (infinite) developments have the diamond property.
\end{corollary}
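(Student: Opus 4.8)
The plan is to derive Corollary~\ref{cor:diamond} from Theorem~\ref{thm:orthogonalization} together with the fact (from the orthogonal theory, adapted to infinite terms) that co-initial \emph{orthogonal} developments have the diamond property. Concretely, given a \woTRS{} $R$ without collapsing rules, a term $t \in \iter{\asig}$, and two co-initial developments $\sigma, \delta$ of redex sets $U$ and $V$ in $t$ with targets $t_1$ and $t_2$ respectively, I would first invoke Theorem~\ref{thm:orthogonalization} to obtain an orthogonalization $\pair{\sigma'}{\delta'}$: orthogonal developments in $t$ whose targets are still $t_1$ and $t_2$. So it suffices to close the diagram for the \emph{orthogonal} pair $\sigma', \delta'$.

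For the orthogonal case I would argue as in the standard finitary/infinitary orthogonal development theory. Since $\sigma'$ and $\delta'$ have no overlaps, the redexes contracted along $\delta'$ have well-defined residuals after each step of $\sigma'$, and projecting $\delta'$ over $\sigma'$ yields a development $\project{\delta'}{\sigma'}$ starting at $t_1$; symmetrically one gets $\project{\sigma'}{\delta'}$ starting at $t_2$. One then shows these two projected developments have a common target $u$. The key points to verify are: (a) the projected developments are again \emph{developments} (of the respective residual sets), which is the orthogonal-projection machinery; (b) they are \emph{strongly convergent}, which follows by the same depth-tracking argument used in Lemma~\ref{lem:strip} — here the non-collapsing hypothesis is essential, since it guarantees (via Lemma~\ref{lem:proj:lift}) that projecting a step at depth $d$ lifts nested redexes at most to depth $d+1$, so minimal depths along the projected sequence tend to infinity exactly as the original ones do; and (c) the two limits coincide, again by a depth-$d$ agreement argument: for each $d$ the two projections agree with each other up to depth $d$ from some point on. For this last comparison the cleanest route is to mimic the structure of the proof of Theorem~\ref{thm:cr}: split both developments at the (finitely many, by strong convergence) steps above a given depth $d$, close the finite part by the diamond property for parallel steps, invoke the Strip/Lift Lemma to propagate, observe that all intermediate terms agree up to depth $d$, and let $d \to \infty$.

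The main obstacle I expect is a bookkeeping subtlety rather than a conceptual one: Theorem~\ref{thm:orthogonalization} is stated for two arbitrary developments, and after orthogonalization the redex sets $U', V'$ may differ from $U, V$ (Y-redexes can have been dropped, outermost redexes replaced by innermost ones), so one must be careful that ``development of $U'$'' and ``development of $V'$'' are genuinely well-defined and terminate at $t_1$, $t_2$ — but this is precisely what the theorem asserts. The other delicate point is establishing strong convergence and the common limit for infinitary orthogonal developments where, unlike in Lemma~\ref{lem:strip}, one side need not be a single parallel step but a full (possibly transfinite) development; however, by the Compression Lemma (Lemma~\ref{lem:compression}) and the fact that a development can be performed in a depth-increasing fashion, one may assume both $\sigma'$ and $\delta'$ have length $\le \omega$ with non-decreasing depths, after which the argument of Theorem~\ref{thm:cr} applies essentially verbatim. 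Thus the corollary follows: $t_1$ and $t_2$ have a common reduct $u$ reachable by (infinitary) developments from each, which is the diamond property for developments.
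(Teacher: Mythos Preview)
Your overall strategy matches the paper's: invoke Theorem~\ref{thm:orthogonalization} to replace $\sigma,\delta$ by orthogonal developments $\sigma',\delta'$ with the same targets $t_1,t_2$, and then close the diagram by orthogonal projection. The paper's proof is essentially two lines: the orthogonal projections $\project{\sigma'}{\delta'}$ and $\project{\delta'}{\sigma'}$ are again developments, they are strongly convergent because the rules are non-collapsing, and they meet at a common term. Your points~(a) and~(b) are exactly this.

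Where your proposal goes astray is point~(c). You propose to establish that the two projections share a limit by ``mimicking the structure of the proof of Theorem~\ref{thm:cr}'': split at depth $d$, close the finite top by the parallel-step diamond, apply the Strip/Lift Lemma, and iterate. But that machinery \emph{constructs} joining reductions; it does not compare two \emph{given} reductions and certify that their limits coincide. Moreover, the joins produced by the Theorem~\ref{thm:cr} construction are general infinitary reductions built from iterated Strip/Lift, not developments of the residual sets of $U'$ and $V'$. Taken literally, your route therefore re-proves $\CRinf$ rather than the diamond property for \emph{developments}. The clean argument---which the paper leaves implicit---is purely orthogonal: once $\sigma',\delta'$ develop non-overlapping sets $U',V'$, both $\stepcomp{\sigma'}{(\project{\delta'}{\sigma'})}$ and $\stepcomp{\delta'}{(\project{\sigma'}{\delta'})}$ are complete developments of the single non-overlapping set $U'\cup V'$; non-collapsingness guarantees that every complete development of a fixed non-overlapping set is strongly convergent with a unique final term, so the two projections meet. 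No compression, Strip/Lift, or depth-$d$ recursion is needed for this step.
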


\begin{proof}
  Let $\sigma, \delta$ be two coinitial developments $t_1 \stackrel{\sigma}{\redi} s \stackrel{\delta}{\red} t_2$.
  Then by Theorem~\ref{thm:orthogonalization} there exists an orthogonalization $\pair{\sigma'}{\delta'}$
  of $\sigma$, $\delta$.
  The orthogonal projections $\project{\sigma'}{\delta'}$ and $\project{\delta'}{\sigma'}$
  are developments again, which are strongly convergent since the rules are not collapsing.
  Hence $t_1 \stackrel{\project{\delta'}{\sigma'}}{\red} s' \stackrel{\project{\sigma'}{\delta'}}{\redi} t_2$.
\end{proof}

Note that in Corollary~\ref{cor:diamond} the non-collapsingness is a necessary condition.
To see this, reconsider Example~\ref{ex:collapse} and observe that the non-confluent derivations are developments.

In a similar vein, we can prove the triangle property for infinitary weakly orthogonal
developments without collapsing rules, but we will postpone this to future work.

\section{Conclusions}

We have shown the failure of $\UNinf$ for weakly orthogonal TRSs
in the presence of two collapsing rules.
For \woTRS{s} without collapsing rules we establish that $\CRinf$ (and hence $\UNinf$) holds.
This result is optimal in the sense that
allowing only one collapsing rule may invalidate $\CRinf$.

However, Question~\ref{q:one} remains open:
\begin{quote}
  \emph{Does $\UNinf$ hold for \woTRS{s} with one collapsing rule?}
\end{quote}
Furthermore, we have shown that infinitary developments in \woTRS{s} without collapsing rules
have the diamond property. In general this property fails already in the presence
of only one collapsing rule.

\paragraph{Acknowledgement.}
  We want to thank Vincent van~Oostrom for many helpful remarks
  and pointers to work on weakly orthogonal TRSs.

\bibliography{main}

\end{document}